\newtheorem{cor}[theorem]{\sc Corollary}
\newtheorem{lem}[theorem]{\sc Lemma}
\newtheorem{cl}[theorem]{\sc Claim}
\DeclareMathOperator*{\argmax}{\arg\!\max}
\begin{document}

\title{Percolation and localisation: Sub-leading eigenvalues of the nonbacktracking matrix}

\shorttitle{Percolation and localisation} 
\shortauthorlist{James Martin et al.} 

\author{
\name{James Martin$^\dagger$, Tim Rogers and Luca Zanetti}
\address{Department of Mathematical Sciences, University of Bath, Bath, UK\email{$^\dagger$Corresponding author. Email: jlm80@bath.ac.uk}}
}
\maketitle

\begin{abstract}
{
The spectrum of the nonbacktracking matrix associated to a network is known to contain fundamental information regarding percolation properties of the network. Indeed, the inverse of its leading eigenvalue is often used as an estimate for the percolation threshold.
However, for many networks with nonbacktracking centrality localised on a few nodes, such as networks with a core-periphery structure, this spectral approach badly underestimates the threshold.
In this work, we study networks that exhibit this localisation effect by looking beyond the leading eigenvalue and searching deeper into the spectrum of the nonbacktracking matrix.
We identify that, when localisation is present, 
the threshold often more closely aligns with the inverse of one of the sub-leading real eigenvalues: the largest real eigenvalue with a ``delocalised'' corresponding eigenvector.
We investigate a core-periphery network model and determine, both theoretically and experimentally, a regime of parameters for which our approach closely approximates the  threshold, while the estimate derived using the leading eigenvalue does not. We further present experimental results on large scale real-world networks that showcase the usefulness of our approach.
}
{percolation; localisation; nonbacktracking spectrum}
\end{abstract}

\section{Introduction}
Network percolation is a well-studied process within statistical physics that investigates the connectivity properties of random sub-networks. In bond (or site) percolation, edges (or nodes) are sampled independently at random with some occupation probability $p$, and the connectivity of the resulting subgraph is assessed. In many real-world networks, there is a sudden transition that separates two distinct regimes: a supercritical regime where the network is likely to contain a giant component that is said to percolate, and a subcritical regime where the network is more likely to be fragmented into many small non-percolating components. This transition has implications in many fields, such as the spread of infectious diseases, the robustness of communication or transport networks to sudden failures, or the permeability of porous materials \cite{hunt2014percolation, li2015network, li2015percolation, pastor2015epidemic}. Furthermore, the “critical” value $p_c$ that separates the two regimes is analogous to a phase boundary in a physical system, and estimating this percolation threshold is a key research question.
For infinite networks, the percolation threshold $p_c$ is a well-defined sharp transition that identifies the critical probability where a component of infinite size first emerges. As a percolating component is a component of infinite size, this concept is less meaningful on finite networks. However, when the network is large, a similar, but less distinct transition can often be observed and estimating this transition can provide critical insights into the connectivity of the network.

While some networks generated using well-defined models have analytical solutions to the percolation threshold, many real-world networks have complicated topological characteristics and analytical solutions are often absent \cite{mingli2021percolation, radicchi2015predicting}.
Instead, direct numerical simulations can precisely model the percolation process, with appropriate metrics to quantify it.
To simulate the percolation process, multiple Monte Carlo realisations are sampled and the percolation strength is computed as $P(p) = \langle S_1 \rangle/N$, where $\langle S_1 \rangle$ is the average number of nodes in the largest component when a proportion $p$ of the original number of edges are occupied~\cite{newman2001fast}.
There are several properties of the network that typically exhibit critical behaviour near the percolation threshold and can be computed by directly applying these numerical simulations. For example, the \textit{susceptibility} $\chi(p) = (\langle S_1^2 \rangle - \langle S_1 \rangle^2 )/ \langle S_1 \rangle$ of the network is expected to be maximal near the percolation threshold, as demonstrated in Figure~\ref{Fig:Background}, leading to the robust numerical estimate $p_\text{sus} = \argmax_p \; \chi(p)$ \cite{radicchi2015predicting}.
Alternatively, near the percolation threshold, we would expect the size of the non-percolating components to be largest, before they connect and form a giant percolating component. Therefore, another estimation of the percolation threshold is the occupation probability where the size of the second largest component is maximal, which we denote $p_\text{slc} = \argmax_p \langle S_2 \rangle$ \cite{karrer2014percolation}. Applying similar intuition, we would also expect the mean size of the non-percolating components to peak near the percolation threshold, leading to a further estimate $p_\text{mnp}$. 
For large networks, however, estimates that require direct simulations can be computationally expensive, and more efficient approaches are often desired.
\begin{figure}[t]
    \centering
    \subfigure[E-R network]{
    \includegraphics[width=0.42\textwidth]{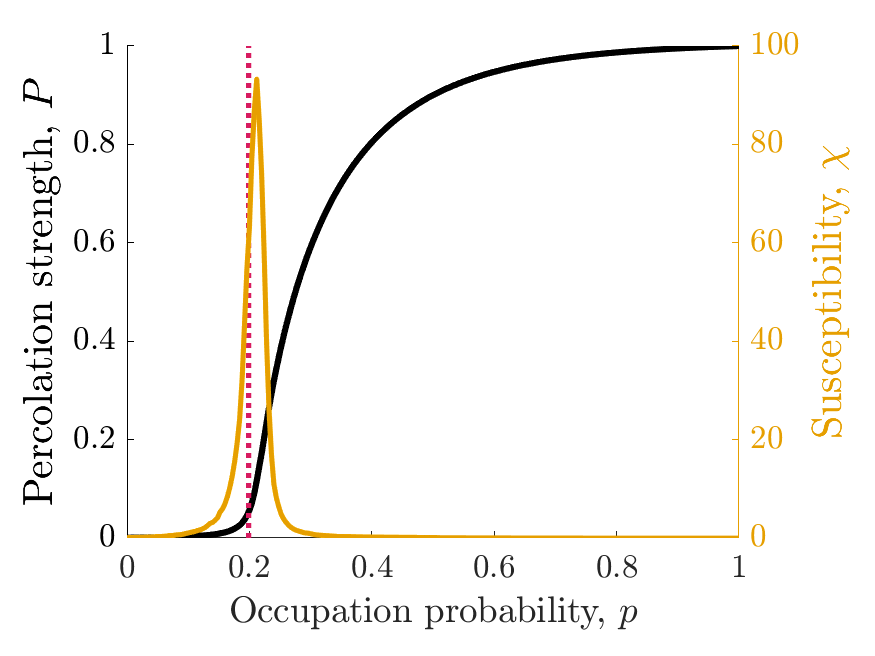}
    \label{fig:Background_ER}
    }
\hspace{0.3cm}
    \subfigure[Core-Periphery network]{
    \includegraphics[width=0.42\textwidth]{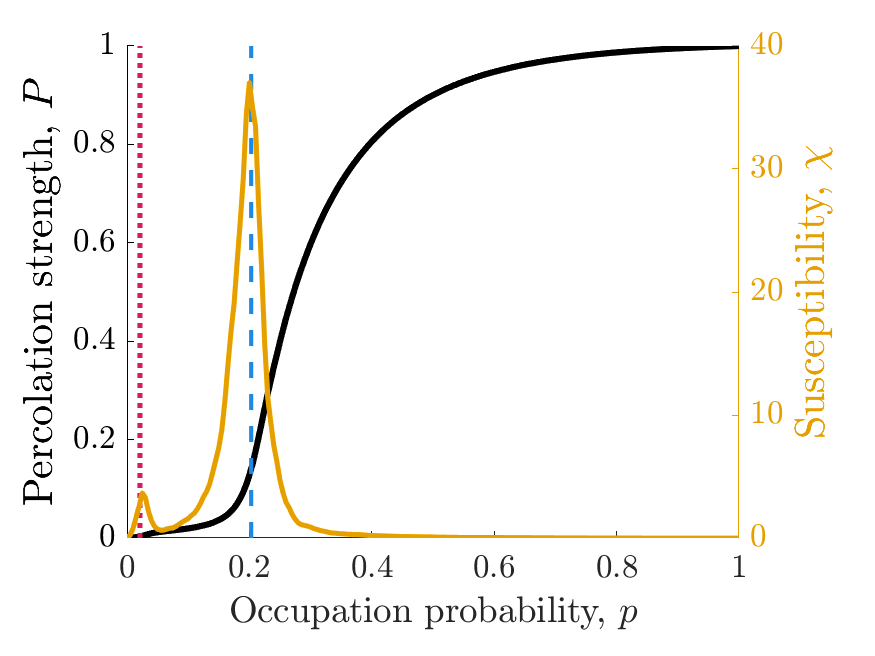}
    \label{fig:Background_CP}
    }
    \caption{
    An example of the percolation process for (a) an E-R network and (b) a core-periphery network. The transition in percolation strength $P(p)$ (thick solid black line) aligns with the maximum peak in susceptibility $\chi(p)$ (solid orange line). The estimate $p_\text{nbt} = 1/\lambda_1(B)$ (dotted vertical red line) aligns with the first peak, which is not maximal in a core-periphery network. The estimate $1/\lambda_2(B)$ (dashed vertical blue line) more closely predicts the threshold. 
    }
    \label{Fig:Background}
\end{figure}

In recent literature, message-passing techniques have shown that applying spectral methods to the so-called nonbacktracking matrix can uncover fundamental information about the network regarding properties such as   percolation or community structure \cite{krzakala2013spectral, hamilton2014tight, karrer2014percolation, budel2021detecting}. In this approach, the percolation threshold is predicted as $p_\text{nbt} = 1 / \lambda_1(B)$, where $B$ is the non-symmetric nonbacktracking matrix defined on the directed edges of the network such that $B_{i \to j,k \to l} = 1$ if $j=k$ and $i \neq l$, and zero otherwise~\cite{hashimoto1989zeta}. The leading eigenvalue $\lambda_1(B)$ measures the growth rate of the number of nonbacktracking walks, i.e., walks that do not immediately return to the previous node, and reflects the propagation of connections throughout the network. This concept closely relates to the branching number of an infinite tree, which describes the average number of new connections that each node generates, and determines whether the tree will continue to grow or eventually die out~\cite{timar2017nonbacktracking}.
In infinite locally tree-like networks, the leading eigenvalue of the nonbacktracking matrix captures the average branching factor: in particular, the network will likely contain a giant component if $\lambda_1(B)$ is greater than 1. By assigning a weight $p$ to each edge, reflecting the occupation probability, the percolation transition will occur close to where $\lambda_1(pB) =1$, and the estimate $p_\text{nbt}$ is recovered.
When the network contains cycles, however, some nonbacktracking walks can revisit edges, and $\lambda_1(B)$ overestimates the average branching factor. Then $p_\text{nbt}$ gives a lower bound of the actual percolation threshold, which could potentially be much smaller than the observed transition. 
Nevertheless, $p_\text{nbt}$ is an upper bound of the alternative spectral estimate $1/\lambda_1(A)$, where $A$ denotes the adjacency matrix. However, the latter method is often preferred when the network is dense, as both spectral estimates are close while the computation time is significantly reduced \cite{bollobas2010percolation}.

To reduce the computational complexity of the nonbacktracking method, the Ihara-Bass determinant formula can be exploited \cite{bass1992ihara}. Given an undirected network $G = (V,E)$, the spectrum of the $2|E| \times 2|E|$ nonbacktracking matrix $B$ contains the entire spectrum of the smaller $2|V| \times 2|V|$ \textit{reduced nonbacktracking matrix} $H$, defined as
\begin{equation}
\label{eq:H_def}
H =
\begin{pmatrix}
A & I-D \\ I & 0
\end{pmatrix},
\end{equation}
where $A$ denotes the symmetric adjacency matrix and $D$ denotes the degree matrix, with diagonal entries $D_{ii} = \sum_j A_{ij}$ and zeros otherwise.
The remaining eigenvalues of $B$ are exactly the trivial eigenvalues $-1$ and $1$, each with multiplicity $|E|-|V|$. Then
$$
p_\text{nbt} = \frac{1}{\lambda_1(B)} = \frac{1}{\lambda_1(H)}.
$$

This estimate closely predicts the percolation threshold for large sparse random networks which are locally tree-like. For example, consider an Erd{\H o}s-R{\' e}nyi (E-R) network $\mathcal{G}_{N,q}$, i.e., a network of $N$ nodes constructed by placing an edge between each pair of nodes independently with probability $q$. In the limit of large network size, the leading eigenvalue of the nonbacktracking matrix of an E-R network is the expected mean degree, i.e., $\lambda_1(B) = d \equiv (N-1)q$, while the remaining eigenvalues lie within a circle centred at the origin with radius $\sqrt{d}$ \cite{bordenave2015non}. Here, $p_\text{nbt}$ recovers the analytical solution to the percolation threshold $p_c = 1/d$ \cite{erdHos1960evolution}. Furthermore, this prediction closely aligns with the observed transition for finite networks, demonstrated in Figure~\ref{fig:Background_ER}.
 
In some networks, however, $p_\text{nbt}$ fails to predict the threshold. For example, when there exists small subgraphs within a large sparse network that contain anomalously high densities of edges, percolation estimates exhibit a spurious localisation effect \cite{pastor2020localization, timar2023localization}. 
The inverse of the leading eigenvalue of the nonbacktracking matrix is biased towards one of these subgraphs and the mass of the leading eigenvector is typically concentrated on the edges incident to it.
Edges within the dense subgraph do not contribute to connectivity propagation within the remainder of the network, which reduces the overall effective connectivity. The actual transition requires a much higher occupation probability to form a giant component spanning a large proportion of the network.

Furthermore, the estimate $p_\text{nbt}$ typically predicts the first peak of the susceptibility function, which tends not to be the maximal peak for networks when the nonbacktracking centrality (NBC) is concentrated on a small subset of nodes~\cite{pastor2020localization}. The NBC of a node $i$ is defined as
$
\text{NBC}_i = \sum_{j \in \partial_i} v_{1}^{(j \to i)}(B),
$
where $\partial_i = \{ j : A_{ij}=1 \}$ is the neighbourhood of $i$, and $v_{1}^{(j \to i)}(B)$ is the element in the leading eigenvector of $B$ that corresponds to the directed edge $j \to i$. This quantity is precisely the $i^\text{th}$ element of the leading eigenvector of $H$, that is
$
\text{NBC}_i = v_{1}^{(i)}(H),
$
and is proportional to the probability that $i$ is a member of the giant component near criticality~\cite{krzakala2013spectral}. Figure~\ref{fig:Background_CP} demonstrates this localisation effect on a network where a complete subgraph with $N=50$ and $d=49$ is embedded into a sparse E-R network with $N=5000$ and $d = 5$. The dynamics of the percolation process are very similar to the dynamics on the independent E-R network shown in Figure~\ref{fig:Background_ER}. However, $p_\text{nbt}$ badly underestimates the percolation transition and instead aligns with an additional initial peak in the susceptibility function. Conversely, the inverse of the next largest eigenvalue, $1/\lambda_2(B)$, more closely predicts the threshold.

In this paper, we further investigate the nonbacktracking spectrum and produce a methodology that more closely predicts the percolation threshold for a range of networks where $p_\text{nbt}$ typically fails. In particular, our methodology outperforms $p_\text{nbt}$ when small dense subgraphs are present in an otherwise sparse network. 
Real-world networks with this structure often have multiple real outliers in the nonbacktracking spectrum~\cite{krzakala2013spectral} and we hypothesize that exploring these additional sub-leading eigenpairs can lead to a more accurate prediction of the threshold. More precisely, our approach seeks the largest eigenvalue of the nonbacktracking matrix with a corresponding delocalised eigenvector, that is, an eigenvector whose mass is not concentrated on a small subset of elements. 
We present theoretical arguments for networks that are constructed by inserting a small number of edges between disconnected components. By bounding the perturbation of nonbacktracking eigenvalues upon inserting these edges, we identify a regime of parameters of a synthetic network model where our estimate will likely align with the percolation threshold, while outperforming $p_\text{nbt}$.

We structure this paper as follows.
In Section~\ref{Sec:Core-Periphery}, we summarise literature describing the effect that localisation has on predictions of the percolation threshold. We then formally introduce our approach to predict the percolation threshold, as well as present experimental and theoretical results on simple synthetically generated core-periphery networks.
In Section~\ref{Sec:Numerical Results}, we demonstrate the performance of our approach on more complex synthetic networks that contain multiple dense subgraphs, as well as showcasing results on some real-world networks reflecting collaboration on scientific research papers, as well as a peer-to-peer file sharing network.
Finally, in Section~\ref{Sec:Conclusion}, we discuss the message of the paper and some possible future research directions.

\section{Core-Periphery networks} \label{Sec:Core-Periphery} 
A simple example where the leading eigenvalue of the nonbacktracking matrix does not closely align with the percolation threshold is when the network has a core-periphery structure, in which there exists a small subgraph (the core) has a much higher density of edges than the remainder of the network (the periphery). 

\subsection{Related work}
When the NBC is concentrated on a small subset of nodes, Pastor‑Satorras and Castellano~\cite{pastor2020localization} observe that the first peak in susceptibility is often not maximal: instead, the first peak is located at the critical probability for the resulting induced subgraph to become connected. More generally, multiple peaks of the susceptibility often align with multiple percolation transitions within the network. In core-periphery networks, the NBC is typically concentrated on nodes within the core subgraph. The estimate $p_\text{nbt}$ aligns with the first peak in susceptibility, which predicts the critical probability for the core to become connected, badly underestimating the actual threshold of the network. 

To quantify this claim, let $\mathbf{u}_i(H)$ denote the first $N$ elements of the eigenvector $\mathbf{v}_i(H)$. Then entries in $\mathbf{u}_1(H)$ are proportional to the NBC of the corresponding node. 
From the definition of $H$, each eigenvector $\mathbf{v}_i$, with corresponding eigenvalue $\lambda_i$, has the form $\mathbf{v}_i = \begin{bmatrix} \mathbf{u}_i & \lambda^{-1}_i \mathbf{u}_i \end{bmatrix}^\intercal$, for some $\mathbf{u}_i \in \mathbb{C}^N$. Thus, when $\lambda_i$ is large, the last $N$ elements of the corresponding eigenvector will not contain any significant additional information regarding the percolation process. Therefore, in general, whenever the mass of the NBC is concentrated on a small subset of nodes, the vectors $\mathbf{u}_1(H),\mathbf{v}_1(H)$ and $\mathbf{v}_1(B)$ will be localised. 
One measure of localisation within a vector is the inverse participation ratio (IPR), defined as $\text{IPR}(\mathbf{x}) = (\sum_i |x_i|^4)/(\sum_i |x_i|^2)^2$~\cite{timar2023localization}, where $\text{IPR}(\mathbf{x})$ is lowest when all entries of $\mathbf{x}$ are equal and the vector is completely delocalised.

Consider a network model constructed by embedding an arbitrary finite (child) subgraph into an infinite locally tree-like (mother) network. Assume that branches emerging from the child network do not intersect at any finite distance. Let $B$, $B_\text{child}$ and $B_\text{mother}$ denote the nonbacktracking matrices of the composite network, the child component and the mother component, respectively. Timar et al.~\cite{timar2023localization} show that the leading eigenvalue of $B$ is the largest eigenvalue between $B_\text{child}$ and $B_\text{mother}$, i.e., $\lambda_1(B) = \max \left(\lambda_1(B_\text{child}),\lambda_1(B_\text{mother})\right)$. As the mother network is locally tree-like, the eigenvalue $\lambda_1(B_\text{mother})$ is simply its branching number. The percolation threshold of the composite network then depends on this branching number, which, in general, is a real positive eigenvalue within the spectrum of $B$, i.e., $p_c = 1/\lambda_1(B_\text{mother}) = 1/\lambda_i(B)$ for some $i$. When the finite child subgraph is sufficiently dense, using $\lambda_1(B)$ to predict the threshold will instead provide the underestimation $p_\text{nbt} = 1/\lambda_1(B_\text{child})$. Furthermore, the mass of the NBC of the network is concentrated on the subset of nodes within the dense child network. In particular, the vector $\mathbf{u}_1(H)$ is localised.
    
Krzakala et al.~\cite{krzakala2013spectral} investigate the spectrum of the nonbacktracking matrix of a sparse network. They show that the number of real eigenvalues outside of the circle with radius $\sqrt{\lambda_1(B)}$, which typically contains the bulk of the eigenvalues, often reflects the number of communities within the network. Furthermore, the corresponding eigenvectors can be used to partition the nodes into these communities. Real-world networks often have many real eigenvalues outside the bulk. With this in mind, a large sparse network that contains multiple small dense subgraphs can be viewed as having a community structure, and each subgraph contributes an additional large real eigenvalue outside of the bulk. When the subgraph is small, the corresponding eigenvector is localised on the edges incident to this induced subgraph. There may then be multiple large eigenvalues, with localised eigenvectors, whose inverse leads towards a prediction of the threshold biased towards the corresponding small subgraph. The actual threshold will more likely align with the inverse of the largest eigenvalue that corresponds to a large majority of nodes in the network, as reflected by the distribution of mass within the corresponding eigenvector. As the outlying eigenvalues are often real, these eigenpairs can be computed efficiently by using variations of the power method accompanied with deflation techniques.

\subsection{Our approach}
Suppose, in a core-periphery network, the periphery is locally tree-like and sufficiently large. Then the network resembles the composite network constructed by Timar et al.~\cite{timar2023localization}, where a finite network is embedded into an infinite locally tree-like network. This analogy suggests $p_\text{nbt}$ is biased towards connectivity properties of the core, and the NBC, i.e., $\mathbf{u}_1(H)$, is localised on the nodes within it.
The leading eigenvectors $\mathbf{v}_1(B)$ and $\mathbf{v}_1(H)$ will likely also be localised on the nodes or edges incident to the core. Furthermore, their result suggests the percolation threshold of the network will likely be closer to threshold of the subgraph induced by nodes in the periphery, which aligns with the inverse of the leading eigenvalue of its nonbacktracking matrix. This closely corresponds to the second largest real eigenvalue of the nonbacktracking matrix of the overall core-periphery network, that is $p_c \approx 1/\lambda_2(B)$. Moreover, the corresponding eigenvectors $\mathbf{u}_2(H)$, $\mathbf{v}_2(H)$ and $\mathbf{v}_2(B)$ will likely be ``delocalised'', i.e., their mass is distributed evenly across a large proportion of entries.

We extend this intuition to a large sparse network that contains multiple small dense subgraphs, which we observe typically have multiple real outliers within the nonbacktracking spectrum. 
Our approach seeks a delocalised eigenvector of the nonbacktracking matrix, so that the corresponding eigenvalue more fairly reflects the dynamics of a large majority of the network. 
Let $\lambda_1(B),\dots,\lambda_k(B)$ and $\mathbf{v}_1(B),\dots,\mathbf{v}_k(B)$ denote the positive real eigenvalues of the nonbacktracking matrix $B$ and their corresponding eigenvectors, respectively, where we sort eigenvalues in nonincreasing order (i.e., $\lambda_1(B) \geq \lambda_2(B) \geq \dots > 0$). We predict the percolation threshold will align with $p_\text{deloc} = 1/\lambda_j(H)$, where the corresponding vector $\mathbf{u}_j(H)$ is the first vector with a ``small'' IPR.
We do not prescribe a general precise rule to choose the appropriate eigenvalue, and instead use heuristics within our experiments.

In the remainder of Section~\ref{Sec:Core-Periphery}, we restrict our investigation to a simple core-periphery model, however we present experimental results for a synthetic network model with multiple cores, as well as on some real-world networks, in Section~\ref{Sec:Numerical Results}.

\subsection{Defining a synthetic core-periphery network model}
We investigate the performance of our approach on a synthetically generated network, constructed using the stochastic block model (SBM). The SBM is a widely used method to generate networks with an underlying community structure. The model generates a network $G$ by partitioning the nodes into disjoint groups and assigning edges between pairs of nodes depending on this assignment. The network is constructed by assigning each node $i$ a label $l_i \in \{1,\dots,R\}$, such that there are $N_r$ nodes labeled $r$. The size of the network is then $N = \sum_{r=1}^R N_r$. Let $Q$ be an $R \times R$ symmetric matrix of probabilities. Then, an edge is independently generated between every unordered pair of nodes $i$ and $j$ with probability $Q_{l_i,l_j}$.

Consider a network $G$ constructed using the SBM with block sizes $N_1 = N_\text{d}$, $N_2 = N_\text{s}$ and probability matrix 
\begin{equation}
\label{Eq:SBM_CP}
Q = 
\begin{pmatrix}
    q_\text{d} & q_\text{s} \\ q_\text{s} & q_\text{s}
\end{pmatrix},
\end{equation}
where $N_\text{d} \ll N_\text{s}$. 
The subgraphs $G_\text{d}$ and $G_\text{s}$ induced by the diagonal blocks are Erd{\H o}s-R{\' e}nyi (E-R) networks with mean degrees $d_\text{d} = (N_\text{d}-1)q_\text{d}$ and $d_\text{s} = (N_\text{s}-1)q_\text{s}$, respectively. When $d_\text{d} > d_\text{s}$, this construction models a core-periphery network. For a network constructed in this way, and with $N_\text{d}$ fixed while $N_\text{s} \to \infty$, the network will closely resemble an E-R network with mean degree $d_\text{s}$, which has the analytical solution $p_c = 1/d_\text{s}$~\cite{erdHos1960evolution}.

\subsection{Experimental results on a core-periphery network model}
In Figure~\ref{Fig:Syn_SBM_1Dense}, we present experimental results for a network constructed using the SBM with a core-periphery structure defined using \eqref{Eq:SBM_CP}. The two leading eigenvalues of $B$ closely correspond to the leading eigenvalues of the nonbacktracking matrix of $G_\text{d}$ and $G_\text{s}$, i.e., $\lambda_1(H) \approx d_\text{d}$ and $\lambda_2(H) \approx d_\text{s}$. Our approach $p_\text{deloc} = 1/\lambda_2(H)$ aligns with $p_\text{sus}, p_\text{slc}$ and $p_\text{mnp}$, which are each derived using numerical simulations, and are all located where the percolation strength appears to undergo a transition. The estimate $p_\text{nbt} = 1/\lambda_1(H)$ clearly underestimates the threshold.
Additionally, the leading vector $\mathbf{u}_1(H)$ is much more localised than $\mathbf{u}_2(H)$, as reflected by a decrease in IPR when sorting the real eigenvalues in decreasing order (i.e., from right to left). 
\begin{figure}[t]
    \centering
    \includegraphics[width=0.42\textwidth]{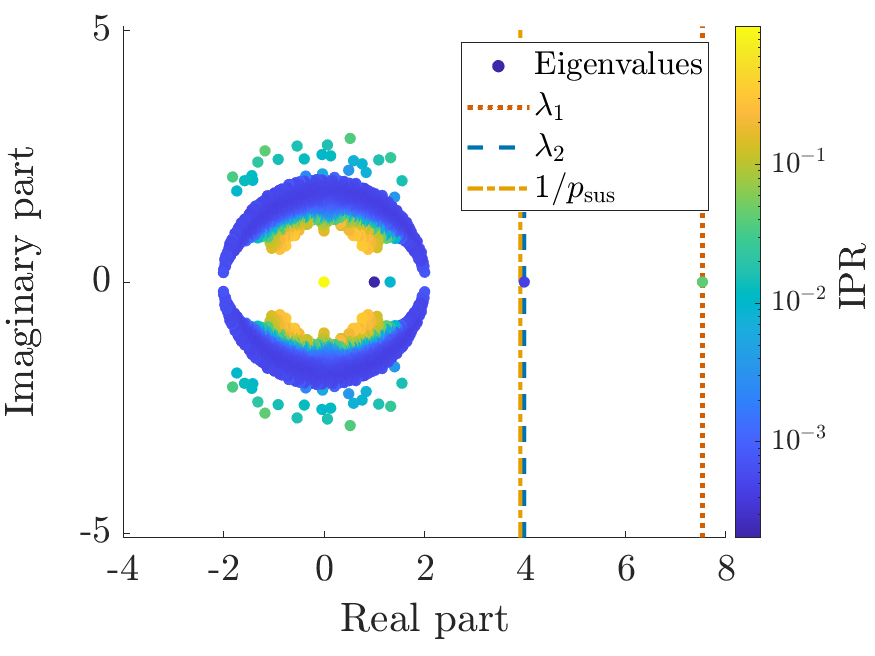}
    \hspace{0.3cm}
    \includegraphics[width=0.42\textwidth]{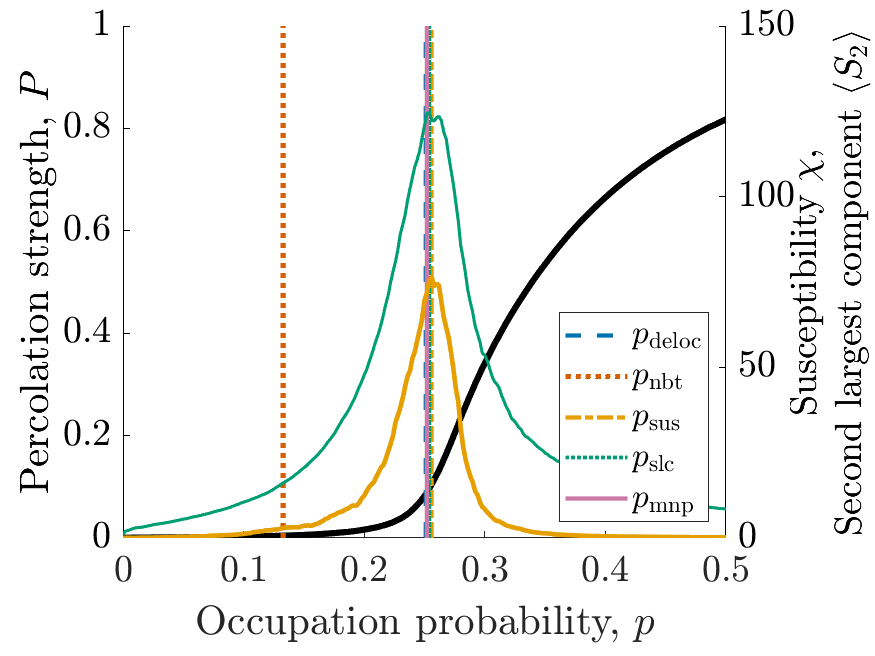}

    \includegraphics[width=0.42\textwidth]{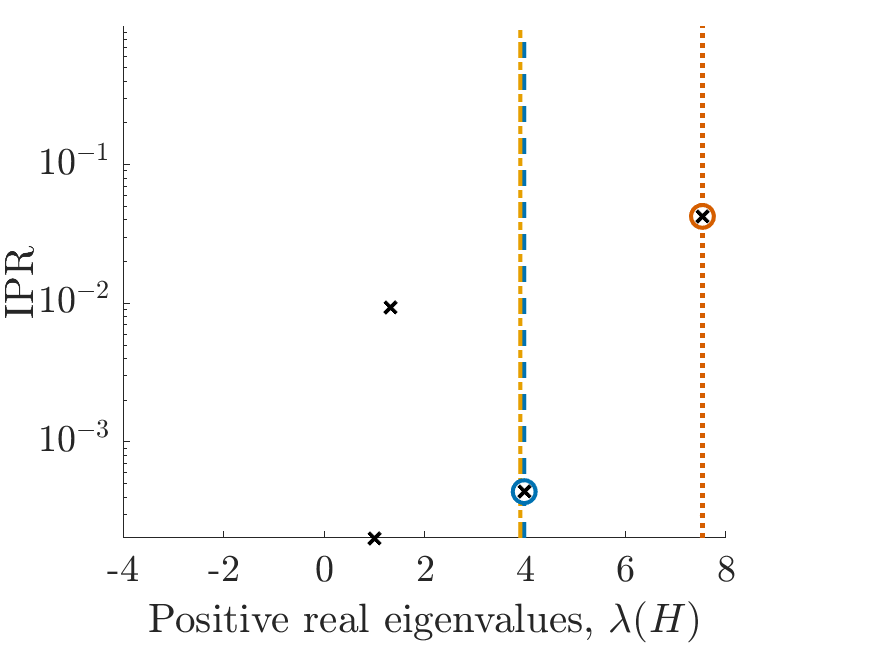}
    \hspace{0.3cm}
    \includegraphics[width=0.42\textwidth]{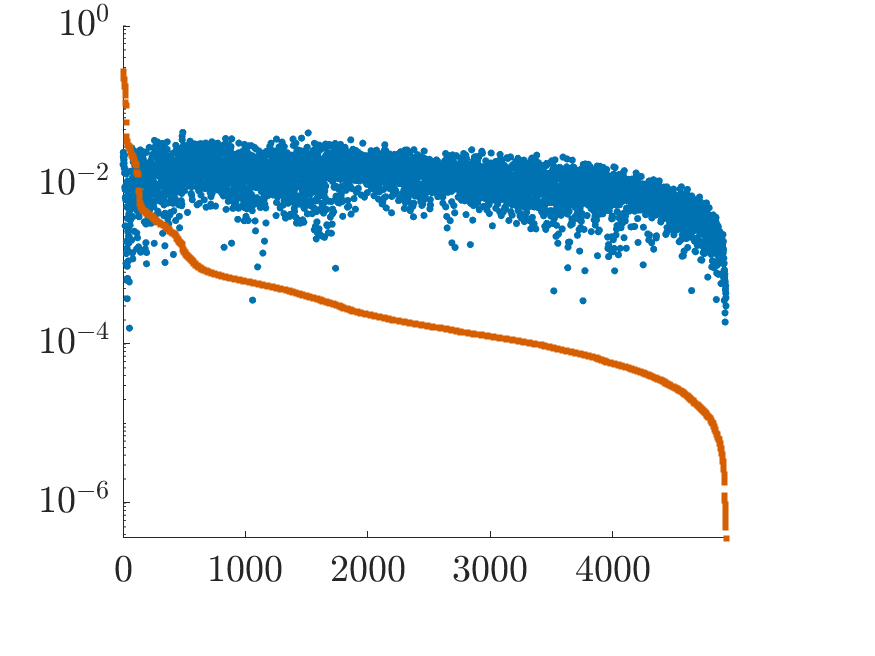}
    \caption{
    Analysis of an SBM network, with $N_\text{d}=25$, $N_\text{s} = 5000$, $d_\text{d} = 8$ and $d_\text{s} = 4$. 
    Top left: Eigenvalues of $H$, coloured according to the IPR of the corresponding vector $\mathbf{u}(H)$.
    Top right: Percolation strength $P$ (thick black line), susceptibility $\chi$ (solid orange line) and largest non-percolating component size $\langle S_2 \rangle$ (thin solid green line), over 500 Monte Carlo simulations. Numerical estimates are recorded in Table~\ref{Tab:Results}.
    Bottom left: IPR of the vectors $\mathbf{u}(H)$ corresponding to real positive eigenvalues. Circles indicate vectors $\mathbf{u}_{1}(H)$ (red) and $\mathbf{u}_{2}(H)$ (blue).
    Bottom right: Absolute value of entries of $\mathbf{u}_1(H)$ (red squares) and $\mathbf{u}_2(H)$ (blue circles), sorted in descending order of entries in $\mathbf{u}_1(H)$.
    }
    \label{Fig:Syn_SBM_1Dense}
\end{figure}

Recall that the bulk eigenvalues in the nonbacktracking spectrum of the induced core component $G_\text{d}$ are expected to be confined to a circle with radius $\sqrt{d_\text{d}}$.
Figure~\ref{Fig:Syn_SBM_1Dense} shows that the bulk eigenvalues corresponding to the core subgraph within the composite network $G$ also appear to be confined to this circle. 
If this core subgraph is sufficiently dense, such that $\sqrt{d_\text{d}} > d_\text{s}$, some of these bulk eigenvalues will likely have a larger magnitude than the eigenvalue corresponding to $d_\text{s}$, potentially causing this desired eigenvalue to become more challenging to detect.\footnote{See Appendix~\ref{App:Extra_phase} for additional experimental results on the IPR when sorting eigenvalues in descending order of magnitude.}
However, the figure shows that the bulk core eigenvalues are typically complex and their corresponding eigenvectors are localised: choosing the largest delocalised real eigenvalue will likely recover the largest eigenvalue corresponding to the periphery component. 
\begin{figure}[t]
    \centering
    \includegraphics[width=0.45\textwidth]{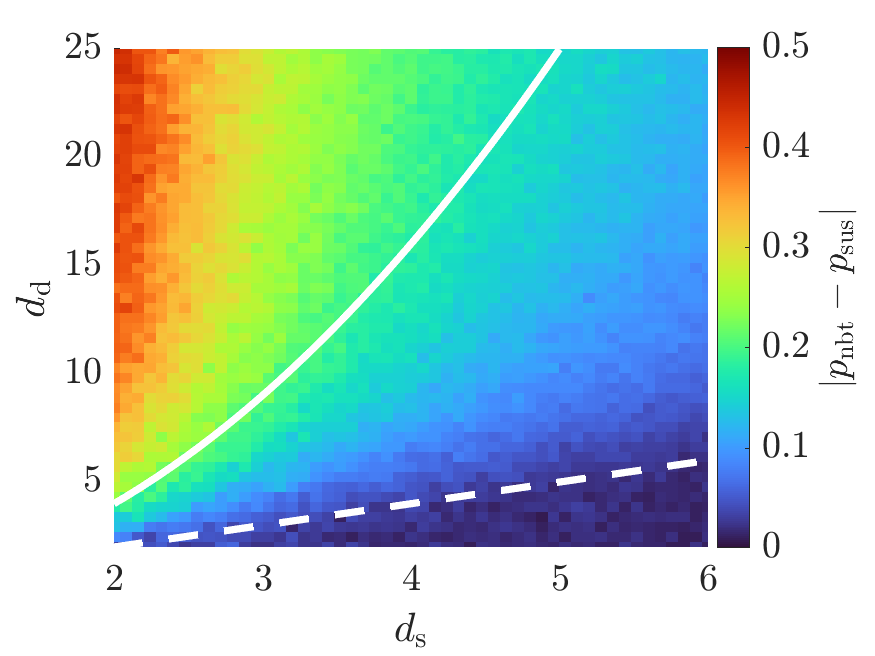}
\hspace{0.1cm}
    \includegraphics[width=0.45\textwidth]{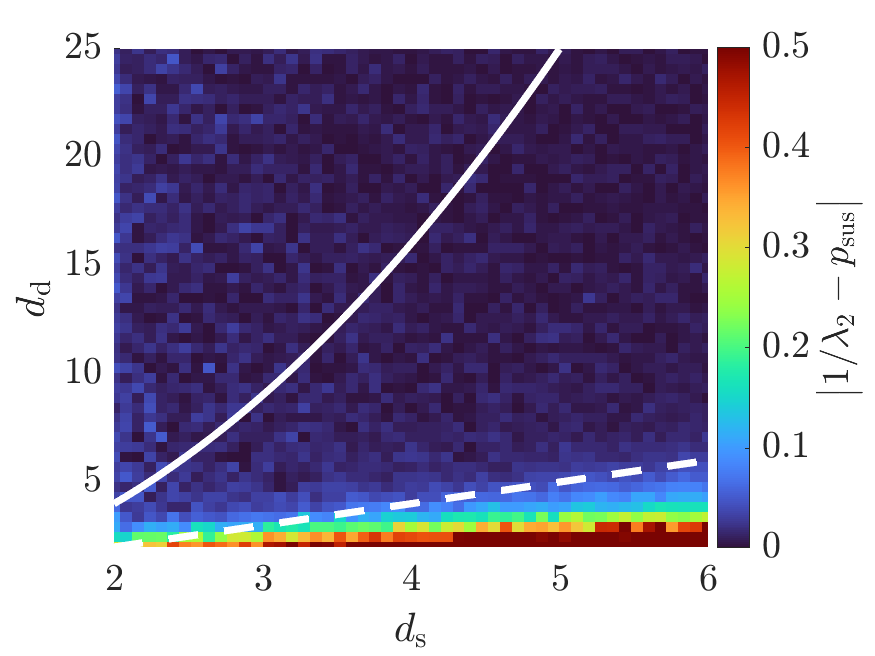}
    \caption{
    Results on an SBM network, with $N_\text{d}=25$ and $N_\text{s}=1500$ and a range of parameters for $d_\text{d}$ and $d_\text{s}$. For each set of parameters, the estimates $p_\text{nbt}$ and $1/\lambda_2$ are averaged over 50 randomly generated networks. The estimate $p_\text{sus}$ is averaged over $5$ randomly generated networks, using 20 Monte Carlo simulations per network.
    Difference between the inverse of the largest (left) or second largest (right) real eigenvalue and the susceptibility estimate, i.e., $\big| p_\text{nbt} - p_\mathrm{sus} \big|$ or $\big| 1/\lambda_2  - p_\mathrm{sus} \big|$.
    Solid line: $d_\text{s} = \sqrt{d_\text{d}}$. Dashed line: $d_\text{s} = d_\text{d}$.
    }
    \label{Fig:Syn_Phase}
\end{figure}

Figure~\ref{Fig:Syn_Phase} displays a measure for the error in spectral estimations by comparing the result with numerical simulations. The figure shows that, for networks in the core-periphery regime $d_\text{s} > d_\text{d}$, the estimate $p_\text{deloc} = 1 / \lambda_2(H)$ closely aligns with $p_\text{sus}$ and largely outperforms $p_\text{nbt}$. In this regime, experimental results of the IPR suggest $\mathbf{u}_2(H)$ is much less localised than $\mathbf{u}_1(H)$: our method will indeed appropriately select the eigenvalue $\lambda_2(H)$ to use within the estimation of the threshold. We expect our approach to perform particularly well when $\sqrt{d_\text{d}} < d_\text{s}$, however the figure further shows that outside of this regime (i.e., above the solid line), $p_\text{deloc}$ can also often precisely predict the percolation threshold.

While Figure~\ref{Fig:Syn_Phase} shows a strong performance in the regime $d_\text{s} > \sqrt{d_\text{d}}$, we note that certain parameters might result in a pair of real eigenvalues becoming complex conjugate pairs, potentially causing the eigenvalue relevant to the percolation threshold to become undetectable. An investigation into this complex dynamic of the nonbacktracking spectrum is left for further research.

\subsection{A bound for the perturbation of eigenvalues}
We now identify a regime of networks such that the nonbacktracking eigenvalue with second largest magnitude closely corresponds to the leading nonbacktracking eigenvalue of the underlying periphery component.
We apply the perturbation arguments of Coste and Zhu~\cite{coste2021eigenvalues} to obtain a theorem that bounds the perturbation of the leading nonbacktracking eigenvalues of the disconnected core and periphery components when small number of edges are inserted between them.

\begin{theorem}
\label{Thrm: Eigenvalue perturbation}
    Let $A, \hat{A}, X, \hat{X}$ be $N \times N$ matrices. Let $\hat{A}$ and $\hat{X}$ be block diagonal matrices with blocks $\hat{A}_1, \dots, \hat{A}_l$ and $\hat{X}_1,\dots,\hat{X}_l$ respectively, where each $\hat{A}_i$ and $\hat{X}_i$ are $N_i \times N_i$ matrices such that $N = \sum_i N_i$. Suppose each $\hat{A}_i$ and $\hat{X}_i$ are commonly diagonalisable by the non-singular matrix $P_i$, i.e., $P_i \hat{A}_i P_i^{-1}$ and $P_i \hat{X}_i P_i^{-1}$ are diagonal matrices. Let
    \begin{equation*}
        L =
        \begin{pmatrix}
            A & X \\ I & 0
        \end{pmatrix}
        \hspace{0.8cm} \text{and} \hspace{0.8cm} 
        \hat{L} =
        \begin{pmatrix}
            \hat{A} & \hat{X} \\ I & 0
        \end{pmatrix}.
    \end{equation*}
    Then, for any eigenvalue $\mu$ of $L$, there exists an eigenvalue $\nu$ of $\hat{L}$ such that
    \begin{equation*}
        | \mu - \nu | \leq \sqrt{\max_i \kappa (P_i)} \cdot \sqrt{\| X-\hat{X} + \mu (A-\hat{A})\|_2},
    \end{equation*}
    where $\kappa(P_i)$ is the condition number $\kappa(P_i) = \| P_i \|_2 \cdot \| P_i^{-1} \|_2$.
\end{theorem}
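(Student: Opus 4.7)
My plan is to exploit the fact that $L$ is a linearisation of the quadratic matrix polynomial $Q(\mu) := \mu^{2} I - \mu A - X$, so that the spectrum of $L$ coincides with the roots of $\det Q(\mu)=0$, and similarly $\hat Q(\mu):=\mu^{2}I-\mu\hat A-\hat X$ encodes the spectrum of $\hat L$. Concretely, writing $v=(v_{1},v_{2})^{\intercal}$ for an eigenvector of $L$ with eigenvalue $\mu$, the second block row gives $v_{1}=\mu v_{2}$ and then the first row yields $Q(\mu)v_{2}=0$; moreover $v_{2}\neq 0$ since otherwise $v=0$. This reduces the problem from a perturbation of the block matrix $L$ to a perturbation of the quadratic eigenproblem.

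Next I would control $\|\hat Q(\mu)^{-1}\|_{2}$ in terms of $\delta:=\min_{\nu\in\mathrm{spec}(\hat L)}|\mu-\nu|$. Because $\hat A,\hat X$ are block diagonal with blocks commonly diagonalised by $P_{i}$, the polynomial $\hat Q(\mu)$ is block diagonal with $P_{i}\hat Q_{i}(\mu)P_{i}^{-1}$ diagonal and $k$-th entry $\mu^{2}-\mu a_{i,k}-x_{i,k}$, where $a_{i,k},x_{i,k}$ are the diagonal entries of $P_{i}\hat A_{i}P_{i}^{-1}$ and $P_{i}\hat X_{i}P_{i}^{-1}$. The two roots $\nu,\nu'$ of this scalar quadratic are eigenvalues of $\hat L$ (again by the linearisation correspondence), so
\begin{equation*}
|\mu^{2}-\mu a_{i,k}-x_{i,k}|=|\mu-\nu|\,|\mu-\nu'|\;\geq\;\delta^{2}.
\end{equation*}
Hence $\|\hat Q_{i}(\mu)^{-1}\|_{2}\leq \kappa(P_{i})/\delta^{2}$, and since $\hat Q(\mu)$ is block diagonal, $\|\hat Q(\mu)^{-1}\|_{2}\leq \max_{i}\kappa(P_{i})/\delta^{2}$.

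Finally, I would combine the two ingredients via the identity
\begin{equation*}
\hat Q(\mu)v_{2}=\bigl(\hat Q(\mu)-Q(\mu)\bigr)v_{2}=\bigl(X-\hat X+\mu(A-\hat A)\bigr)v_{2},
\end{equation*}
which follows immediately from $Q(\mu)v_{2}=0$. Assuming $\delta>0$ (otherwise $\mu\in\mathrm{spec}(\hat L)$ and the bound is trivial), I invert $\hat Q(\mu)$, take norms of both sides, divide by $\|v_{2}\|_{2}>0$, and obtain $1\leq (\max_{i}\kappa(P_{i})/\delta^{2})\,\|X-\hat X+\mu(A-\hat A)\|_{2}$. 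Taking square roots delivers exactly the stated inequality.

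The main obstacle is the middle step: translating a bound on the resolvent of the \emph{quadratic} polynomial $\hat Q$ into a bound on distance to eigenvalues of $\hat L$. Unlike the Bauer--Fike setting, where $\|(\mu I-M)^{-1}\|$ is controlled by $1/\delta$, here each scalar factor of $\det \hat Q$ contributes \emph{two} roots of $\hat L$, so the natural bound is $1/\delta^{2}$. This is precisely what produces the square root in the final estimate and explains why the condition number enters as $\sqrt{\max_{i}\kappa(P_{i})}$ rather than $\max_{i}\kappa(P_{i})$.
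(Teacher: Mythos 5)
Your proposal is correct and follows essentially the same route as the paper's proof: linearise $L$ to the quadratic polynomial $Q(\mu)=\mu^2 I-\mu A-X$, use the blockwise common diagonalisation to bound the resolvent of $\hat Q(\mu)$ by $\max_i\kappa(P_i)$ divided by the modulus of a scalar factor $(\mu-\nu)(\mu-\nu')$ whose roots are eigenvalues of $\hat L$, and conclude that one of the two root distances is at most the square root of the bound. Your packaging via $\|\hat Q(\mu)^{-1}\|_2\le\max_i\kappa(P_i)/\delta^2$ is a slightly cleaner (and more carefully stated) version of the paper's argument that $-1$ is an eigenvalue of $S_\mu^{-1}(\mu\hat A-\mu A+\hat X-X)$, but the underlying ideas are identical.
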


This theorem is an extension of Theorem~2.2 from Coste and Zhu~\cite{coste2021eigenvalues}, and becomes equivalent with $l=1$. A proof of Theorem~\ref{Thrm: Eigenvalue perturbation} is provided in Appendix~\ref{App:Theorem_proof}.
A useful corollary is as follows.
\begin{cor}
\label{cor:eval_perturbation}
    Suppose $\hat{X}$ is diagonal such that each $\hat{X}_i$ has the form $c_i I$ for some scalar $c_i$. Suppose $\hat{A}$ is Hermitian. Then, for any eigenvalue $\mu$ of $L$, there exists an eigenvalue $\nu$ of $L_0$ such that
    \begin{equation*}
        |\mu - \nu| \leq \sqrt{\| X - \hat{X} + \mu ( A-\hat{A}) \|_2}.
    \end{equation*}
\end{cor}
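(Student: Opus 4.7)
The plan is to deduce the corollary as a direct specialization of Theorem~\ref{Thrm: Eigenvalue perturbation}, the only substantive work being to verify that under the hypotheses we may take the diagonalising matrices $P_i$ to be unitary, so that $\max_i \kappa(P_i) = 1$ and the $\sqrt{\max_i \kappa(P_i)}$ factor in the bound disappears.

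First I would observe that because $\hat{A}$ is Hermitian and block diagonal with blocks $\hat{A}_1,\dots,\hat{A}_l$, each diagonal block $\hat{A}_i$ is itself Hermitian (a principal submatrix of a Hermitian matrix is Hermitian). By the spectral theorem, each $\hat{A}_i$ therefore admits a unitary diagonalisation: there exists a unitary matrix $P_i$ such that $P_i \hat{A}_i P_i^{-1} = P_i \hat{A}_i P_i^{*}$ is diagonal. I would then note that $\hat{X}_i = c_i I$ is already diagonal and in fact commutes with everything, so $P_i \hat{X}_i P_i^{-1} = c_i I$ is trivially diagonal. Thus the same $P_i$ simultaneously diagonalises $\hat{A}_i$ and $\hat{X}_i$, which is exactly the hypothesis required to invoke Theorem~\ref{Thrm: Eigenvalue perturbation}.

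Next I would compute $\kappa(P_i)$. Since $P_i$ is unitary, $\|P_i\|_2 = \|P_i^{-1}\|_2 = \|P_i^{*}\|_2 = 1$ in the spectral norm, so $\kappa(P_i) = 1$ for every $i$, and consequently $\sqrt{\max_i \kappa(P_i)} = 1$. Substituting this into the conclusion of Theorem~\ref{Thrm: Eigenvalue perturbation} immediately yields
\begin{equation*}
|\mu - \nu| \leq \sqrt{\|X - \hat{X} + \mu(A - \hat{A})\|_2},
\end{equation*}
which is the claimed bound (interpreting the $L_0$ in the statement as the perturbed block-diagonal operator $\hat{L}$).

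There is essentially no hard step here; the only thing to watch is that the corollary requires simultaneous diagonalisability by a well-conditioned $P_i$, and the two hypotheses (Hermitian $\hat{A}$ and scalar blocks of $\hat{X}$) are exactly what is needed to guarantee this with $\kappa(P_i) = 1$. If one wanted a slightly more general statement it would be enough that each $\hat{A}_i$ be normal, but I would not pursue that extension since the stated Hermitian hypothesis is what is used later in the application to the reduced nonbacktracking matrix, whose $A$ block is the (Hermitian) adjacency matrix.
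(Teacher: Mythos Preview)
Your proposal is correct and matches the paper's proof essentially line for line: each Hermitian block $\hat{A}_i$ is unitarily diagonalisable, the scalar block $\hat{X}_i = c_i I$ is trivially diagonalised by the same $P_i$, and unitarity gives $\kappa(P_i)=1$, so Theorem~\ref{Thrm: Eigenvalue perturbation} yields the bound directly. Your parenthetical identification of $L_0$ with $\hat{L}$ is also the intended reading.
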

\begin{proof}
As $\hat{A}$ is Hermitian, each block $\hat{A}_i$ is Hermitian, hence diagonalisable by a unitary matrix $P_i$. As each $\hat{X}_i$ is diagonal, each block $\hat{A}_i$ and $\hat{X}_i$ are commonly diagonalisable $P_i$. As $P_i$ is unitary, then $\kappa(P_i) = \|P_i \|_2 \cdot \|P_i^{-1} \|_2 = 1$ for all $i$. By applying Theorem~\ref{Thrm: Eigenvalue perturbation}, we get the desired result.
\end{proof} \vspace{0.2cm}

We will apply \cref{cor:eval_perturbation} to analyse the following random model for core-periphery networks.
Let $G_\text{d} = \mathcal{G}_{N_\text{d},q_\text{d}}$ be a ``small and dense'' E-R network and $G_\text{s} = \mathcal{G}_{N_\text{s},q_\text{s}}$ be a ``large and sparse'' E-R network. We denote with $A_\text{d}$, $A_\text{s}$ their adjacency matrices  and with $d_\text{d} = (N_\text{d} - 1) q_\text{d}$, $d_\text{s} = (N_\text{s} - 1) q_\text{s}$ their expected degree. We assume $N_\text{d} \ll N_\text{s}$ and  $d_\text{d} > d_\text{s}$, so that $G_\text{d}$ will represent the \emph{core} of the network, while $G_\text{s}$ will represent the \emph{periphery}. 
%
We connect
the core to the periphery with a number of edges so that any node in $G_\text{d}$ is connected to at most $K$ nodes in $G_\text{s}$, while any node in $G_\text{s}$ is connected to at most $1$ node in $G_\text{d}$.
With this network model, we will show that the inverse of the second largest eigenvalue of the nonbacktracking matrix of $G$ is closer to the expected percolation threshold of the network $1/d_\text{s}$, rather than the inverse of its leading nonbacktracking eigenvalue $p_\text{nbt}$, which is closer to $1/d_\text{d}$.

Let $C \in \{0,1\}^{N_\text{d} \times N_\text{s}}$, where the entry $C_{ij}$ indicates if there exists an edge between the $i$-th vertex of the core and the $j$-th vertex of the periphery. The adjacency matrix $A$ of $G$ can then be represented as
\begin{equation*}
A =
\begin{pmatrix}
A_\text{d} & C \\ C^{\text{T}} & A_\text{s}
\end{pmatrix}.
\end{equation*}

We recall the reduced nonbacktracking matrix $H$, which contains the non-trivial nonbacktracking eigenvalues of $G$ and, for the purpose of this analysis, we define a matrix $\hat{H}$, such that
\begin{equation*}
    H =
    \begin{pmatrix}
        A_\text{d} & C & I-D_\text{d} & 0 \\
        C^T & A_\text{s} & 0 & I-D_\text{s} \\
        I & 0 & 0 & 0 \\
        0 & I & 0 & 0
    \end{pmatrix}
    \hspace{0.5cm} \text{and} \hspace{0.5cm} 
    \hat{H} =
    \begin{pmatrix}
        A_\text{d} & 0 & (1-d_\text{d})I & 0 \\
        0 & A_\text{s} & 0 & (1-d_\text{s})I \\
        I & 0 & 0 & 0 \\
        0 & I & 0 & 0
    \end{pmatrix}.
\end{equation*}
Notice that the matrix $\hat{H}$ excludes edges between the the components $G_\text{d}$ and $G_\text{s}$ defined by $C$, and uses expected degree instead of the observed degree. 
With this construction, the spectrum of $\hat{H}$ can be determined exactly, given the spectrum of $A_\text{d}$ and $A_\text{s}$~\cite{coste2021eigenvalues}. 

Assume $G_\text{d}$ and $G_\text{s}$ are at least semi-sparse, that is, $q_\text{d}>\log(N_\text{d})/(N_\text{d}-1)$ and $q_\text{s}>\log(N_\text{s})/(N_\text{s}-1)$. Additionally, let $K \leq 4$. By applying Corollary~\ref{cor:eval_perturbation}, we will show that, if $9 \sqrt{d_\text{d}} < d_\text{s} < d_\text{d}$, the eigenvalue $\lambda_2(H)$ is close to the eigenvalue $\lambda_1(H_\text{s}) \approx d_\text{s}$, while $\lambda_1(H)$ is closer to the eigenvalue $\lambda_1(H_\text{d}) \approx d_\text{d}$.

We start by observing that Corollary~\ref{cor:eval_perturbation} implies that, for any eigenvalue $\mu$ of $H$, there exists an eigenvalue $\nu$ of $\hat{H}$ such that
\begin{align}
\begin{split}
\label{eq:applying_bound}
    |\mu - \nu| &\leq
    \sqrt{
    \begin{Vmatrix}
        D_\text{d} - d_\text{d} I & 0 \\ 0 & D_\text{s} - d_\text{s} I
    \end{Vmatrix}
     _2 + |\mu| 
     \begin{Vmatrix}
        0 & C \\ C^T & 0
    \end{Vmatrix}
    _2 }.
\end{split}
\end{align}
By the definition of the spectral norm,
\begin{align}
\label{eq:boundC}
    \begin{Vmatrix} 0 & C \\ C^T & 0 \end{Vmatrix}
    _2
    =
    \sqrt{ 
    \begin{Vmatrix} CC^T & 0 \\ 0 & C^TC \end{Vmatrix}
    _2}
    =
    \sqrt{
    \max \left( \| CC^T\|_2, \|C^TC\|_2 \right)
    }
    \leq
    \sqrt{K} \leq 2,
\end{align}
which follows because $C$ has no more than $K$ elements in either row and no more than $1$ element in either column.

The first term under the square root of \eqref{eq:applying_bound} is bounded by $\max(\|D_\text{d}-d_\text{d} I \|_2, \|D_\text{s} - d_\text{s} I \|_2)$, which we evaluate using a Chernoff bound.

\begin{lem}[Chernoff Bound. Corollary~4.6 in \cite{mitzenmacher2017probability}]
\label{lem:chernoff}
    Let $Y_1, \dots, Y_n$ be independent random variables with $Y_i \in \{0,1\}$. Let $Y = \sum_i Y_i$ and $\mathbb{E}[Y]$ be the expected value of $Y$. Then, for any $0 < \delta < 1$, we have 
$$
\mathbb{P} \left( \left| Y - \mathbb{E}[Y] \right| \geq \delta \mathbb{E}[Y] \right)
\leq 2 \exp \left( \frac{ - \delta^2 \mathbb{E}[Y]}{3} \right).
$$
\end{lem}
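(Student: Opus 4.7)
The plan is to prove this via the standard moment generating function (MGF) argument, handling the upper and lower deviations separately and then combining them by a union bound. Write $\mu := \mathbb{E}[Y] = \sum_i p_i$, where $p_i = \mathbb{P}(Y_i = 1)$, and note the elementary pointwise bound $\mathbb{E}[e^{tY_i}] = 1 + p_i(e^t - 1) \leq \exp\bigl(p_i(e^t - 1)\bigr)$, valid for any real $t$.

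For the upper tail $\mathbb{P}(Y \geq (1+\delta)\mu)$, first I would apply Markov's inequality to $e^{tY}$ with $t > 0$. Independence of the $Y_i$ together with the pointwise bound above factorises the MGF as $\mathbb{E}[e^{tY}] \leq \exp\bigl(\mu(e^t - 1)\bigr)$, so
$$
\mathbb{P}(Y \geq (1+\delta)\mu) \leq \exp\bigl(\mu(e^t - 1) - t(1+\delta)\mu\bigr).
$$
Optimising in $t$ at $t = \log(1+\delta)$ gives the classical Chernoff form
$$
\mathbb{P}(Y \geq (1+\delta)\mu) \leq \left( \frac{e^{\delta}}{(1+\delta)^{1+\delta}} \right)^{\mu}.
$$
For the lower tail, a symmetric argument using $e^{-tY}$ with $t > 0$, the bound $\mathbb{E}[e^{-tY_i}] \leq \exp\bigl(p_i(e^{-t}-1)\bigr)$, and the optimal choice $t = -\log(1-\delta)$ yields
$$
\mathbb{P}(Y \leq (1-\delta)\mu) \leq \left( \frac{e^{-\delta}}{(1-\delta)^{1-\delta}} \right)^{\mu}.
$$

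The remaining analytic step is to show that for $0 < \delta < 1$ each of the bracketed quantities is at most $\exp(-\delta^{2}/3)$, so that raising to the power $\mu$ and summing the two tails produces the stated factor of $2\exp(-\delta^{2}\mu/3)$. Taking logarithms, this reduces to verifying $\delta - (1+\delta)\log(1+\delta) \leq -\delta^{2}/3$ and $-\delta - (1-\delta)\log(1-\delta) \leq -\delta^{2}/3$ on $(0,1)$. Both inequalities follow from the Taylor expansions $(1\pm\delta)\log(1\pm\delta) = \pm\delta + \delta^{2}/2 \mp \delta^{3}/6 + \cdots$ after a routine sign check; in fact the lower-tail bound admits the sharper estimate $-\delta^{2}/2$, which is absorbed into the weaker $-\delta^{2}/3$.

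The main obstacle is not conceptual but analytic: the Chernoff optimisation produces sharp yet unwieldy expressions, and one must carefully verify the Taylor-type inequalities that convert these into the clean $\exp(-\delta^{2}\mu/3)$ form uniformly over $0 < \delta < 1$. Everything else — Markov's inequality, the MGF bound for Bernoulli variables, factorisation by independence, and the final union bound — is entirely routine and can be found in any standard textbook on randomised algorithms, which is precisely why the authors simply cite the result rather than reprove it.
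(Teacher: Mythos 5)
Your proof is correct and is essentially the standard derivation of the cited result (Corollary~4.6 in Mitzenmacher and Upfal); the paper itself offers no proof of this lemma, treating it as an imported textbook fact, so there is nothing to diverge from. The one step you compress --- verifying $\delta-(1+\delta)\log(1+\delta)\leq-\delta^{2}/3$ on $(0,1)$ --- does go through as you indicate, e.g.\ via the alternating-series bound $(1+\delta)\log(1+\delta)\geq\delta+\delta^{2}/2-\delta^{3}/6$ combined with $\delta^{3}/6\leq\delta^{2}/6$.
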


We use Lemma~\ref{lem:chernoff} to first bound $\|D_\text{d}-d_\text{d} I \|_2$. Recall that $G_\text{d}$ has expected degree $d_\text{d} = (N_\text{d} - 1) q_\text{d}$. Denote the maximum degree of this network as $\Delta(G_\text{d})$ and let the network be semi-sparse, so that $q_\text{d} > \log N_\text{d} / (N_\text{d}-1)$. We apply the Chernoff bound using $\delta \geq 2\sqrt{\log N_\text{d} / (N_\text{d}-1)q_\text{d}}$, and then apply the union bound to give
\begin{equation*}
    \mathbb{P} \left( \left| \Delta(G_\text{d}) - d_\text{d} \right| \geq 2\sqrt{d_\text{d} \log N_\text{d}} \right)
    \leq
    2 N_\text{d} \exp \left( \frac{ - 4\log N_\text{d} }{3} \right)
    =
    2N_\text{d}^{-\frac{1}{3}}.
\end{equation*}
As $N_\text{d}$ increases, the RHS becomes small. Then, for large enough $N_\text{d}$, we can say with high probability (w.h.p.) that $\|D_\text{d}-d_\text{d} I \|_2  = | \Delta(G_\text{d}) - d_\text{d} |< 2\sqrt{d_\text{d} \log N_\text{d}} \leq 2 d_\text{d}$.  Following similar arguments, we find $\|D_\text{s}-d_\text{s} I \|_2 < 2 d_\text{s}$
Then, w.h.p., $\max(\|D_\text{d}-d_\text{d} I \|_2, \|D_\text{s} - d_\text{s} I \|_2) < 2d_\text{d}$.

By combining this result with \eqref{eq:boundC}, equation~\eqref{eq:applying_bound} becomes
 $$
 |\mu - \nu| \leq \sqrt{2d_\text{d} + 2|\mu|}.
 $$
To simplify our analysis, we impose the bound $|\mu| \leq 7d_\text{d}$, which follows by considering the row sums of $H$ under the assumption $ \Delta(G_\text{s}) < \Delta(G_\text{d}) \leq (7 d_\text{d} - 2K + 1)/2$. Then we have that
\begin{align*}
|\mu - \nu| &<
4 \sqrt{d_\text{d}}.
\end{align*}

To use Corollary~\ref{cor:eval_perturbation}, we need to identify $\nu$. By applying Lemma~2.1 from Coste and Zhu~\cite{coste2021eigenvalues}, we find that the eigenvalues of $\hat{H}$ can be computed exactly from the eigenvalues of the adjacency matrix $A$. For our network model, $\lambda_2(\hat{H}) = \left(1 + o(1)\right) \lambda_1(H_\text{s}) \approx d_\text{s}$ and $\lambda_3(\hat{H}) = \left(1 + o(1)\right) \lambda_2(H_\text{d}) \approx \sqrt{d_\text{d}}$.
Denote the eigenvalues $\mu_{\text{s},1} = \lambda_2(H)$, $\mu_{\text{d},2} = \lambda_3(H)$ and $\nu_{\text{s},1} = \lambda_2(\hat{H})$, $\nu_{\text{d},2} = \lambda_3(\hat{H})$.
We want to identify the regime of networks where $\mu_{\text{s},1} > \mu_{\text{d},2}$ given that $\nu_{\text{s},1} > \nu_{\text{d},2}$, i.e., we seek the perturbation of $\nu_{\text{s},1}$ and $\nu_{\text{d},2}$ to be small, after the edges indicated by $C$ are inserted between $G_\text{d}$ and $G_\text{s}$.
We have that, w.h.p.,
\begin{align*}
    |\mu_{\text{s},1} - d_\text{s}|
    <
    4\sqrt{d_\text{d}}
\hspace{0.5cm} \text{
and
} \hspace{0.5cm}
    |\mu_{\text{d},2} - \sqrt{d_\text{d}}|
    <
    4\sqrt{d_\text{d}}.
\end{align*}
The distance between $\mu_{\text{s},1}$ and $\mu_{\text{d},2}$ then satisfies $d_\text{s} - \sqrt{d_\text{d}} > 8\sqrt{d_\text{d}}$. Therefore, in the regime $9 \sqrt{d_\text{d}} < d_\text{s}$, which further requires $d_\text{s} > 81$, the eigenvalue $\lambda_2(H)$ aligns closely with the eigenvalue $\lambda_2(\hat{H})$, which recall aligns closely with $\lambda_1(H_\text{s})$. The estimate $1/\lambda_2(H)$ will likely outperform $p_\text{nbt}$ for networks that satisfy these conditions.

We remark that when performing a range of experiments, the observed eigenvalue perturbation appears significantly smaller than the bound suggests. While we do not provide a tighter bound for the eigenvalue perturbation, we do not exclude the possibility of one, and we leave this for potential future research.

\section{Numerical results} \label{Sec:Numerical Results}

To showcase the potential of our observation, we extend our approach to more complex networks by providing experimental results on synthetic networks containing multiple dense subgraphs, as well as on a range of real-world networks where the existence of small dense subgraphs is less clear. Numerical results for these experiments are collated in Table~\ref{Tab:Results}.

\newcommand*{\vcorr}{
    \vadjust{\vspace{-\dp\csname @arstrutbox\endcsname}\vspace{-0.01cm}}%
    \global\let\vcorr\relax
}
\newcommand*{\HeadAux}[3]{%
      \multicolumn{1}{@{}r@{}}{%
      \vcorr
        \vspace{-0.001cm}
        \sbox2{\hspace{4.8cm}}
        \sbox4{0}
        \rlap{
        \hspace{-2\tabcolsep}
        \hspace{-#2}
          \raisebox{\wd4}{\rotatebox{#3}{#1}}
        }
        \ifx\HeadLine Y
          \dimen0=\dimexpr\wd2
          \rlap{\hspace{-0.1cm} \rotatebox{#3}{\hbox{\vrule width\dimen0 height .4pt}}}
        \fi
      }
    }
    \newcommand*{\head}[3]{\HeadAux{\global\let\HeadLine=Y#1}{#2}{#3}}
    \newcommand*{\headNoLine}[3]{\HeadAux{\global\let\HeadLine=N#1}{#2}{#3}}
    \newcolumntype{x}[1]{>{\centering\arraybackslash}p{#1}}
    \newcommand\rotation{60}

\begin{table}[!h]
    \caption{
    Numerical estimates of the percolation threshold for a range of synthetic and real-world networks, corresponding to more detailed experimental results shown in Figures~\ref{Fig:Syn_SBM_1Dense},~\ref{Fig:Syn_SBM_5Dense}-\ref{Fig:RW_Gnutella}. The three estimates $p_\text{sus}$, $p_\text{slc}$ and $p_\text{mnp}$ are computed using direct simulations. The spectral estimates are $p_\text{nbt}=1/\lambda_1(B)$ and $p_\text{deloc} = 1/\lambda_j(B)$, where $B$ is the nonbacktracking matrix. More details on the derivation of these estimates can be found within the text. In the GRQC network, the small vertical line separates two estimates produced using different choices of sub-leading eigenvalue.
    }
    \vspace{0.4cm}
    \setlength{\tabcolsep}{6pt}
    \centering
    \begin{tabular}
    {x{3cm}|x{0.6cm}|x{0.6cm}|x{0.6cm}|x{0.6cm}|x{1.5cm}|x{1cm}}
        \head{Network}{2.1cm}{0}&
        \head{Susceptibility, $p_\text{sus}$}{0.3cm}{\rotation} &
        \head{Second largest component, $p_\text{slc}$}{0.3cm}{\rotation} &
        \head{\shortstack{Mean non-percolating \\ components, $p_\text{mnp}$}}{0.4cm}{\rotation} &
        \head{\shortstack{Nonbacktracking spectral \\ estimate, $p_\text{nbt}$}}{0.4cm}{\rotation} &
        \head{\shortstack{Delocalised spectral \\ estimate, $p_\text{deloc}$}}{0.85cm}{\rotation} &
        \headNoLine{Eigenvalue index, $j$}{0.5cm}{\rotation}
        \\
        \hline
        SBM, $r=1$ & 0.26 & 0.25 & 0.25 & 0.13 & 0.25 & 2
        \\
        SBM, $r=5$ & 0.25 & 0.25 & 0.25 & 0.15 & 0.25 & 6
        \\
        HEP-TH & 0.08 & 0.08 & 0.18 & 0.03 & 0.06 & 4
        \\
        GRQC & 0.13 & 0.13 & 0.16 & 0.02 & 0.10 $|$ 0.12 & 14 $|$ 19
        \\
        Gnutella & 0.05 & 0.11 & 0.11 & 0.04 & 0.12 & 4
        \\ \hline
    \end{tabular}
    \label{Tab:Results}
\end{table}

\subsection{Synthetic networks}
We construct a synthetic network containing $r$ dense embedded subgraphs, generated using the SBM with block sizes $N_1 = \dots = N_{r}=N_\text{d}$, $N_{r+1}=N_\text{s} \gg N_\text{d}$ and the symmetric $r+1 \times r+1$ probability matrix $Q_r$ where
\begin{equation*}
Q_r(i,j) = 
\begin{cases}
    q_\text{d} & \text{if $i = j \leq r$}
    \\
    q_\text{s} & \text{otherwise}
\end{cases}.
\end{equation*}
The case $r=1$ reflects the core-periphery model used in Section~\ref{Sec:Core-Periphery} and the trivial case $r=0$ is an Erd{\H o}s-R{\' e}nyi model $\mathcal{G}_{N_\text{s},q_\text{s}}$.
\begin{figure}[t]
    \centering
    \includegraphics[width=0.42\textwidth]{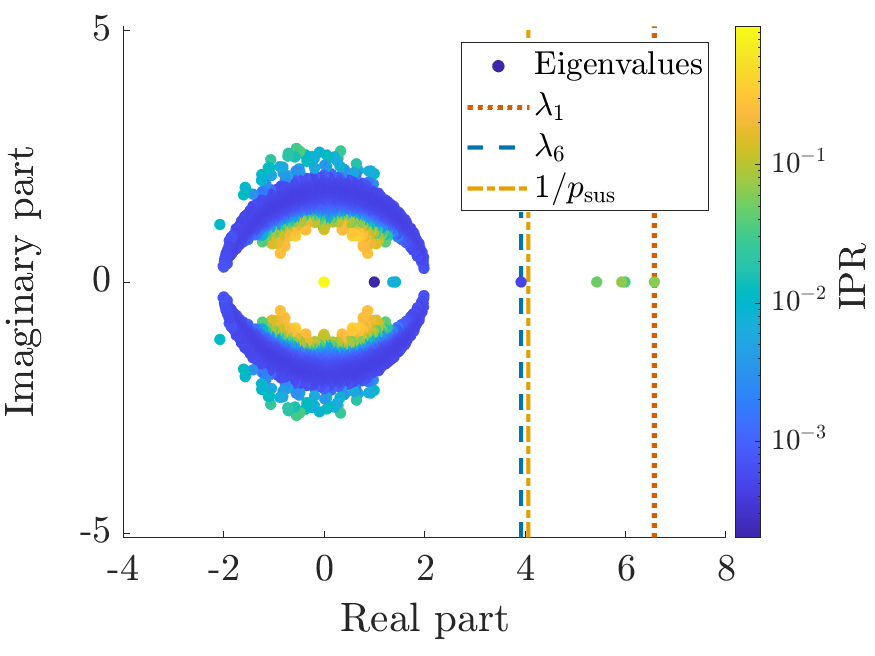}
    \hspace{0.3cm}
    \includegraphics[width=0.42\textwidth]{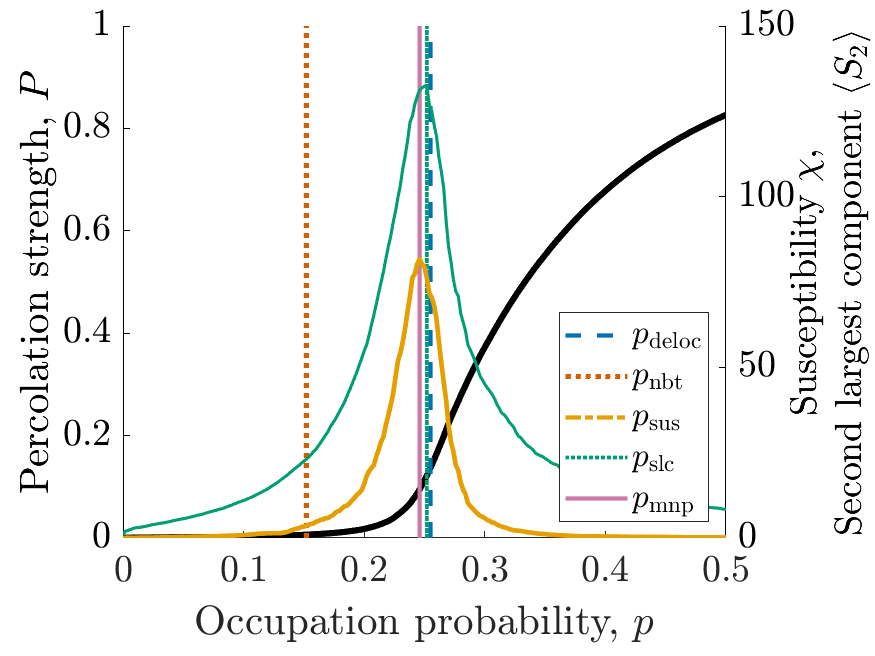}

    \includegraphics[width=0.42\textwidth]{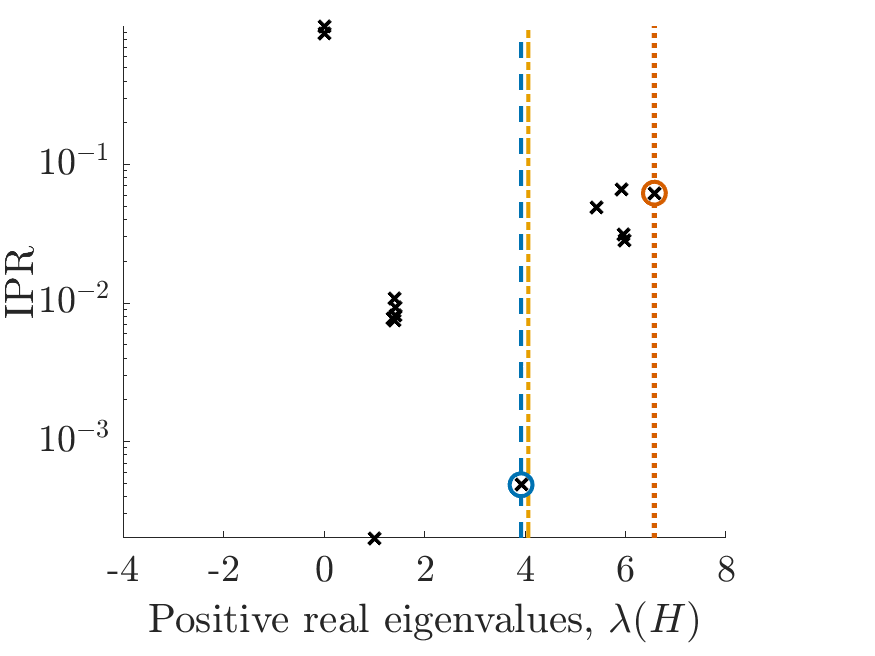}
    \hspace{0.3cm}
    \includegraphics[width=0.42\textwidth]{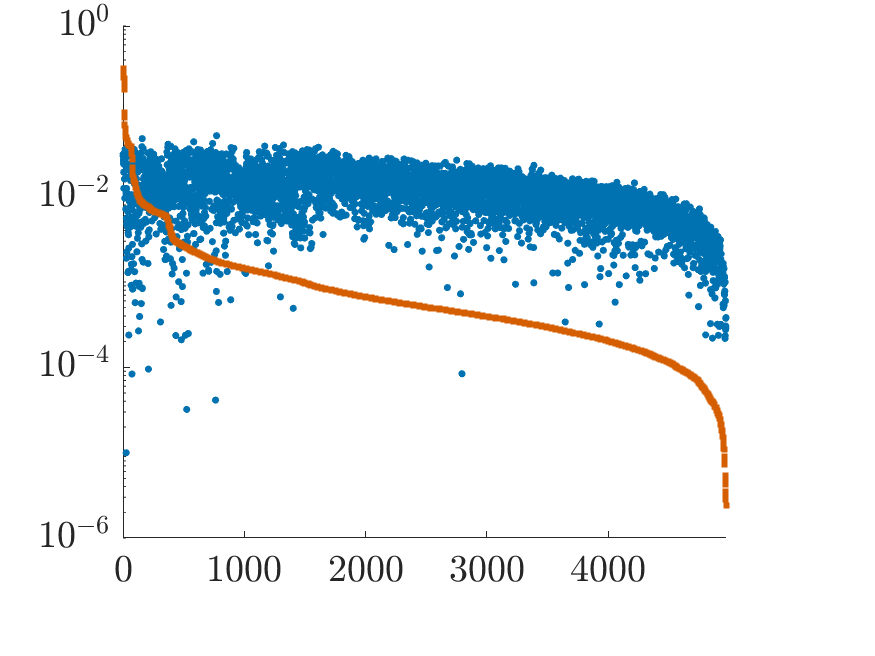}
    \caption{
    Analysis of an SBM network using probability matrix $Q_5$, with $N_\text{d}=12$, $N_\text{s} = 5000$, $d_\text{d} = 8$ and $d_\text{s} = 4$.
    Top left: Eigenvalues of $H$, coloured according to the IPR of the corresponding vector $\mathbf{u}(H)$.
    Top right: Percolation strength $P$ (thick black line), susceptibility $\chi$ (solid orange line) and largest non-percolating component size $\langle S_2 \rangle$ (thin solid green line), over 500 Monte Carlo simulations. Numerical estimates are recorded in Table~\ref{Tab:Results}.
    Bottom left: IPR of the vectors $\mathbf{u}(H)$ corresponding to real positive eigenvalues. Circles indicate vectors $\mathbf{u}_{1}(H)$ (red) and $\mathbf{u}_{6}(H)$ (blue).
    Bottom right: Absolute value of entries of $\mathbf{u}_1(H)$ (red squares) and $\mathbf{u}_{6}(H)$ (blue circles), sorted in descending order of entries in $\mathbf{u}_1(H)$.
    }
    \label{Fig:Syn_SBM_5Dense}
\end{figure}

Figure~\ref{Fig:Syn_SBM_5Dense} shows a synthetically generated network using $Q_5$, where 5 dense subgraphs are embedded into a much larger sparse network. The sixth eigenvalue best corresponds to the percolation threshold and the corresponding eigenvector is considerably less localised than the first five eigenvectors, as shown by the large decrease in IPR as well as a visual observation of the eigenvectors.

\subsection{Real-world networks}
We provide some experimental results on real-world networks to showcase our approach. For these complex networks, there often exists an eigenvalue whose inverse closely aligns with $p_\text{sus}$, where there is a sudden sharp increase in percolation strength. While our method of exploring additional sub-leading eigenvalues does not outperform $p_\text{nbt}$ for every network, we particularly observe an improvement when the nonbacktracking centrality is localised.
\begin{figure}[t]
    \centering
    \includegraphics[width=0.42\textwidth]{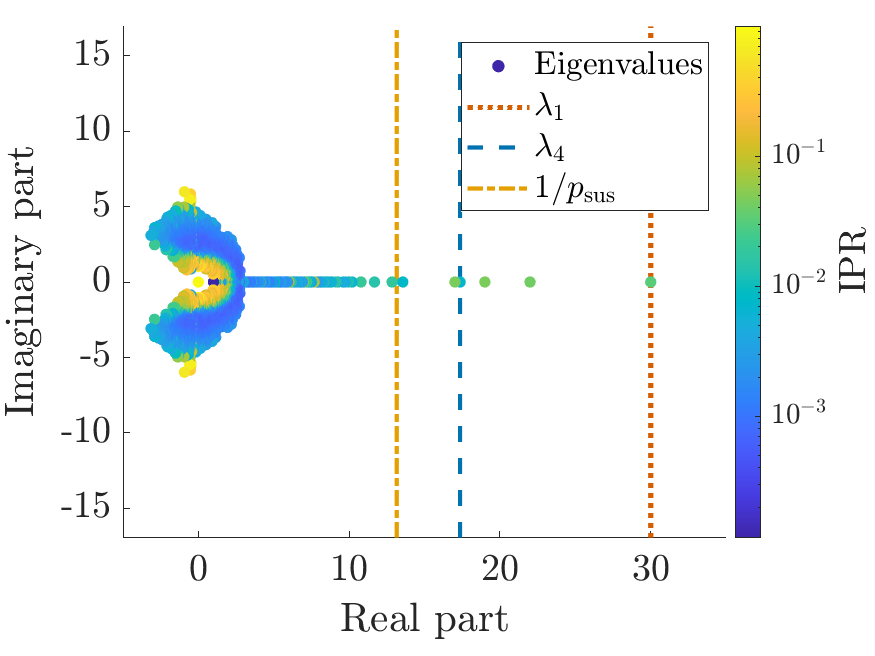}
    \hspace{0.3cm}
    \includegraphics[width=0.42\textwidth]{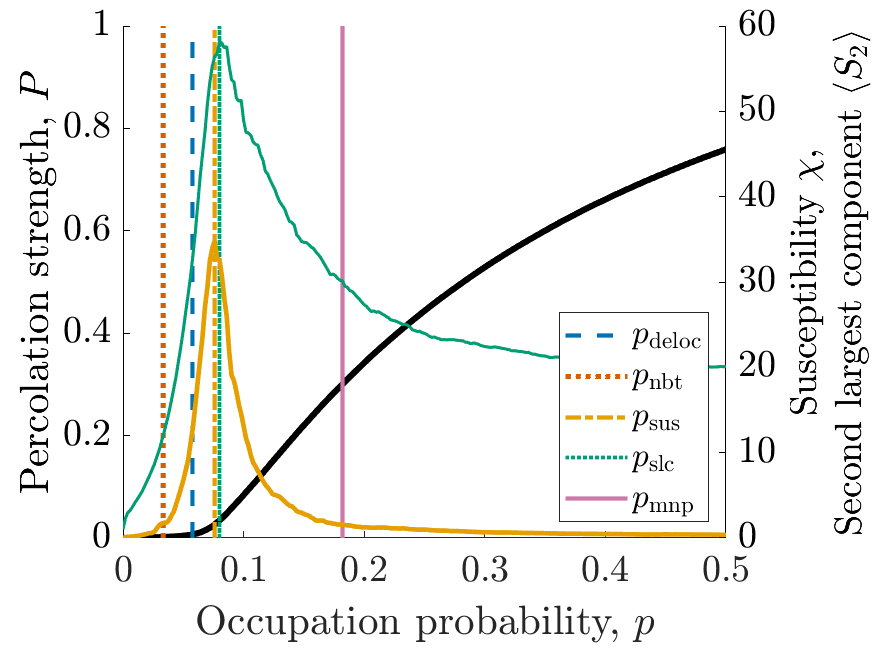}

    \includegraphics[width=0.42\textwidth]{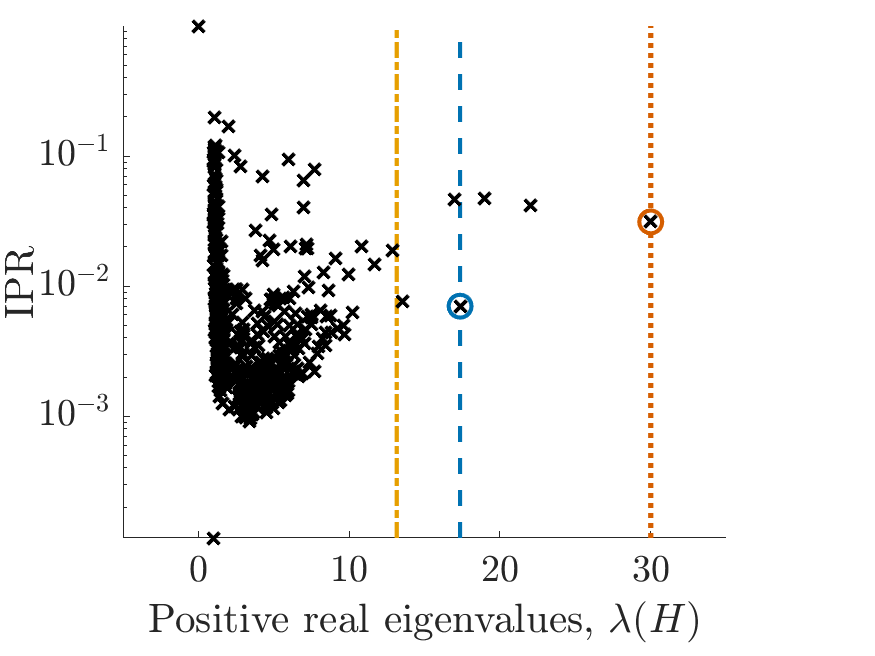}
    \hspace{0.3cm}
    \includegraphics[width=0.42\textwidth]{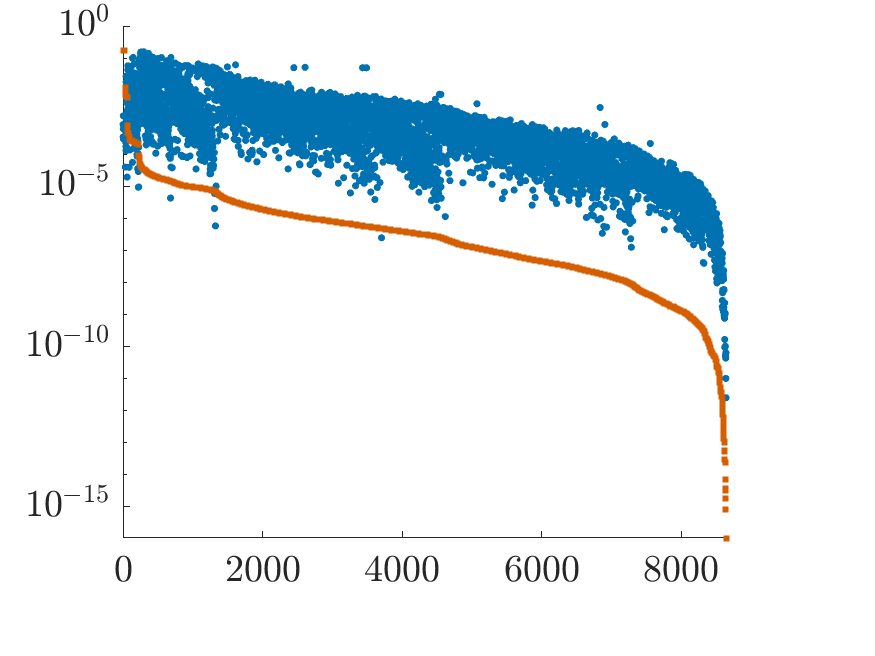}
    \caption{
    Analysis of a collaboration network for papers submitted to the High Energy Physics Theory section of the arXiv.
    Top left: Eigenvalues of $H$, coloured according to the IPR of the corresponding vector $\mathbf{u}(H)$.
    Top right: Percolation strength $P$ (thick black line), susceptibility $\chi$ (solid orange line) and largest non-percolating component size $\langle S_2 \rangle$ (thin solid green line), over 500 Monte Carlo simulations. Numerical estimates are recorded in Table~\ref{Tab:Results}.
    Bottom left: IPR of the vectors $\mathbf{u}(H)$ corresponding to real positive eigenvalues. Circles indicate vectors $\mathbf{u}_{1}(H)$ (red) and $\mathbf{u}_{4}(H)$ (blue).
    Bottom right: Absolute value of entries of $\mathbf{u}_1(H)$ (red squares) and $\mathbf{u}_4(H)$ (blue circles), sorted in descending order of entries in $\mathbf{u}_1(H)$.
    }
    \label{Fig:RW_HepTh}
\end{figure}

We first present results on a real-world dataset modelling collaborations between authors that submitted to the High Energy Physics Theory (HEP-TH) section of the arXiv~\cite{leskovec2007graph,snapnets}. An undirected edge between two nodes implies the corresponding authors co-authored at least one paper. After taking the largest connected component, the network contains $N=8638$ nodes and $M=24806$ edges. One example where localisation can arise in this type of network is when a paper has an unusually large authorship, thus forming a large completely connected subgraph. 
Figure~\ref{Fig:RW_HepTh} suggests there exists a percolation transition and that the nonbacktracking estimate $p_\text{nbt}$ underestimates this threshold. Indeed, the eigenvectors corresponding to the three largest real eigenvalues exhibit localisation, whereas the fourth does not: the inverse of fourth largest real eigenvalue aligns more closely with both the peak of the susceptibility function and the observed transition in percolation strength.
Furthermore, $\text{IPR}(\lambda_6)$ has a similar size to $\text{IPR}(\lambda_4)$, and the estimate $1/\lambda_6$ more strongly aligns with $p_\text{sus}$.
\begin{figure}[t]
    \centering
    \includegraphics[width=0.42\textwidth]{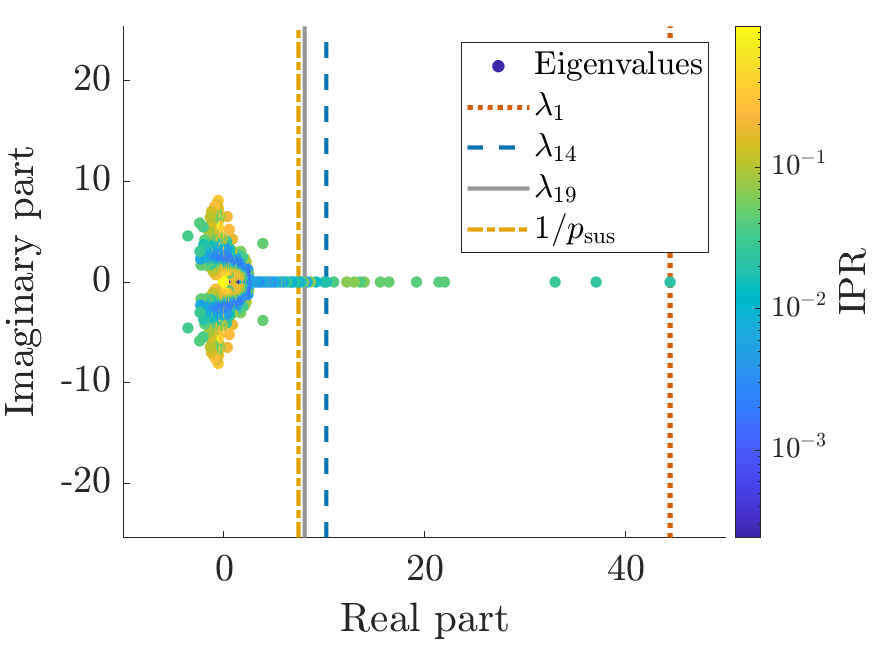}
    \hspace{0.3cm}
    \includegraphics[width=0.42\textwidth]{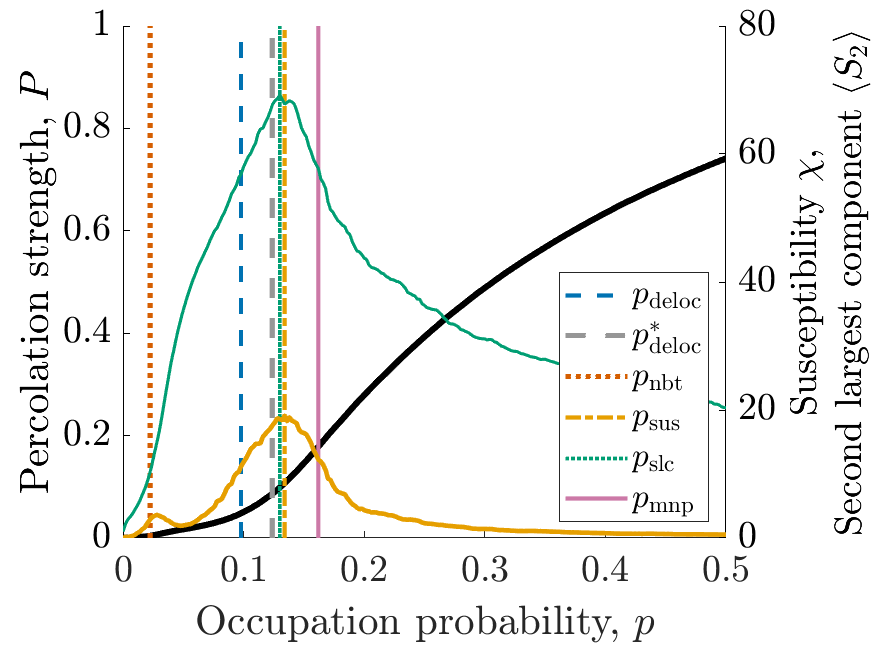}

    \includegraphics[width=0.42\textwidth]{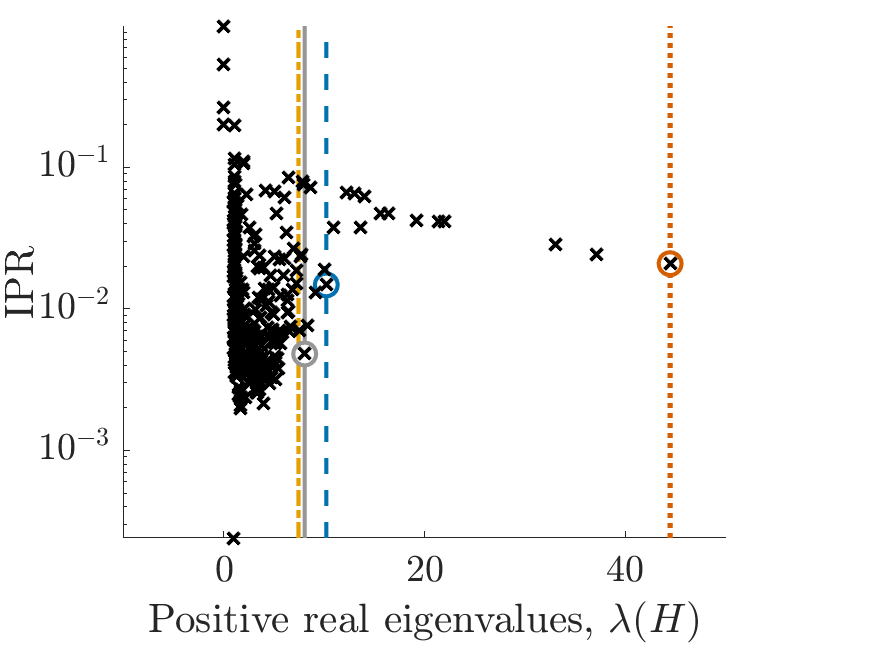}
    \hspace{0.3cm}
    \includegraphics[width=0.42\textwidth]{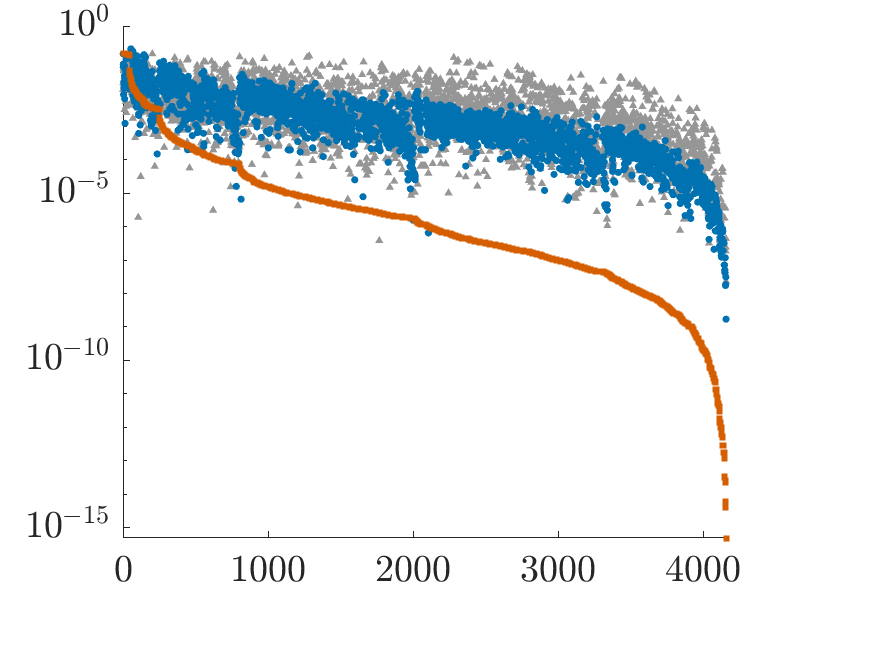}
    \caption{
    Analysis of a collaboration network for papers submitted to General Relativity and Quantum Cosmology.
    Top left: Eigenvalues of $H$, coloured according to the IPR of the corresponding vector $\mathbf{u}(H)$.
    Top right: Percolation strength $P$ (thick black line), susceptibility $\chi$ (solid orange line) and largest non-percolating component size $\langle S_2 \rangle$ (thin solid green line), over 500 Monte Carlo simulations. Numerical estimates are recorded in Table~\ref{Tab:Results}.
    Bottom left: IPR of the vectors $\mathbf{u}(H)$ corresponding to real positive eigenvalues. Circles indicate vectors $\mathbf{u}_{1}(H)$ (red), $\mathbf{u}_{14}(H)$ (blue) and $\mathbf{u}_{19}(H)$ (grey).
    Bottom right: Absolute value of entries of $\mathbf{u}_1(H)$ (red squares), $\mathbf{u}_{14}(H)$ (blue circles) and $\mathbf{u}_{19}(H)$ (grey triangles), sorted in descending order of entries in $\mathbf{u}_1(H)$.
    }
    \label{Fig:RW_GRQC}
\end{figure}

We next present experimental results on another real-world collaboration network, representing author collaborations on research papers submitted to General Relativity and Quantum Cosmology (GRQC)~\cite{snapnets}. After taking the largest component, the network contains $N=4158$ nodes and $M=13422$ edges.
Figure~\ref{Fig:RW_GRQC} demonstrates the nonbacktracking estimate $p_\text{nbt}$ is significantly smaller than the approaches that use direct simulations, and predicts the first peak of the susceptibility function rather than the maximal peak.
The IPR of $\mathbf{u}_{14}(H)$ is noticeably less than the previous vectors and the estimate $p_\text{deloc} = 1/\lambda_{14}$ is much closer to the maximal susceptibility peak, providing a better estimate. Furthermore, the following vectors exhibit additional decreases in IPR. In particular, we might alternatively argue the vector $\mathbf{u}_{19}(H)$ is the first vector to be delocalised. Indeed, the value $p^*_\text{deloc} = 1/\lambda_{19}$ is a better estimate, as it more closely aligns with the peak of the susceptibility function $p_\text{sus}$, as well as the estimate $p_\text{slc}$.
\begin{figure}[t]
    \centering
    \includegraphics[width=0.42\textwidth]{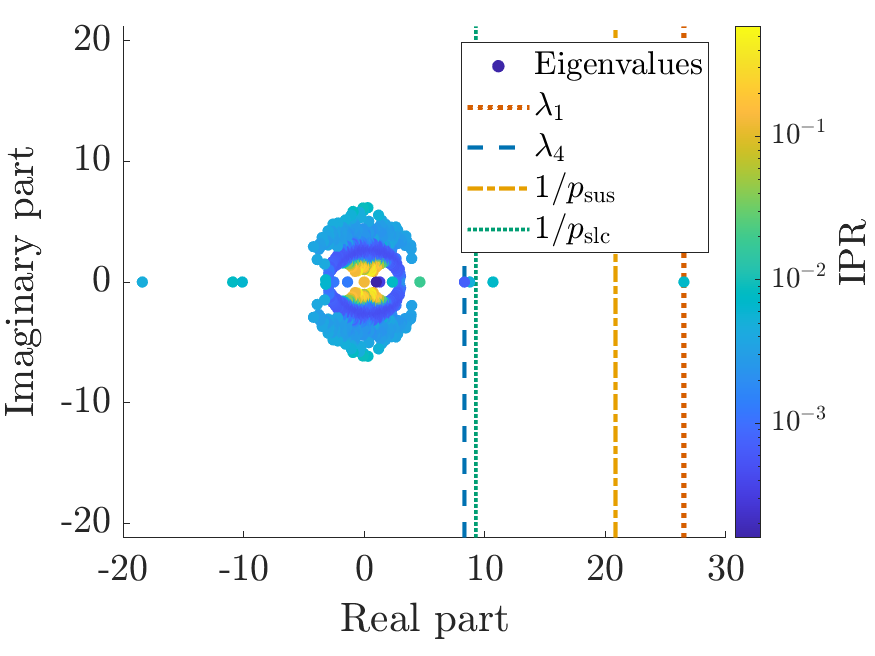}
    \hspace{0.3cm}
    \includegraphics[width=0.42\textwidth]{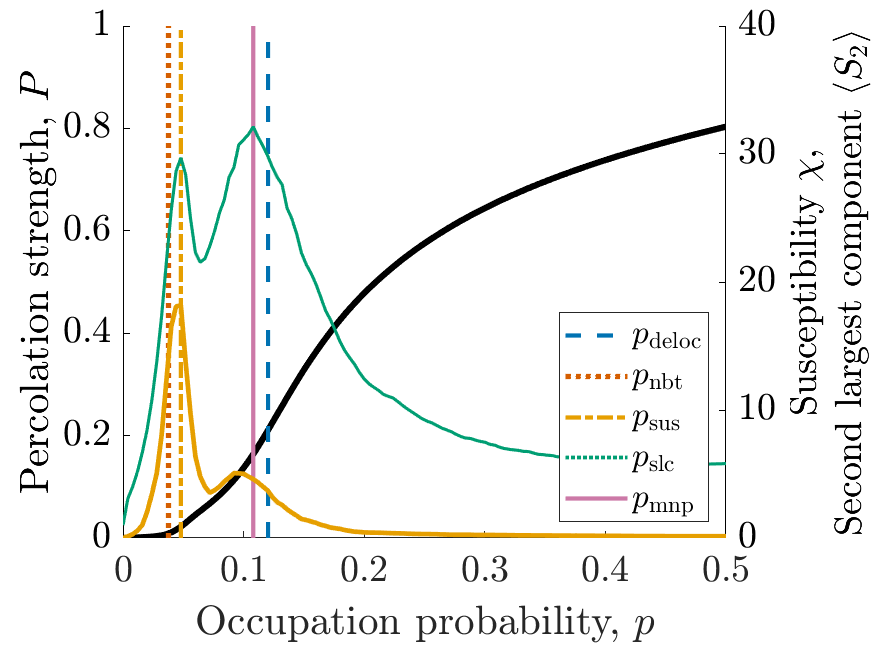}

    \includegraphics[width=0.42\textwidth]{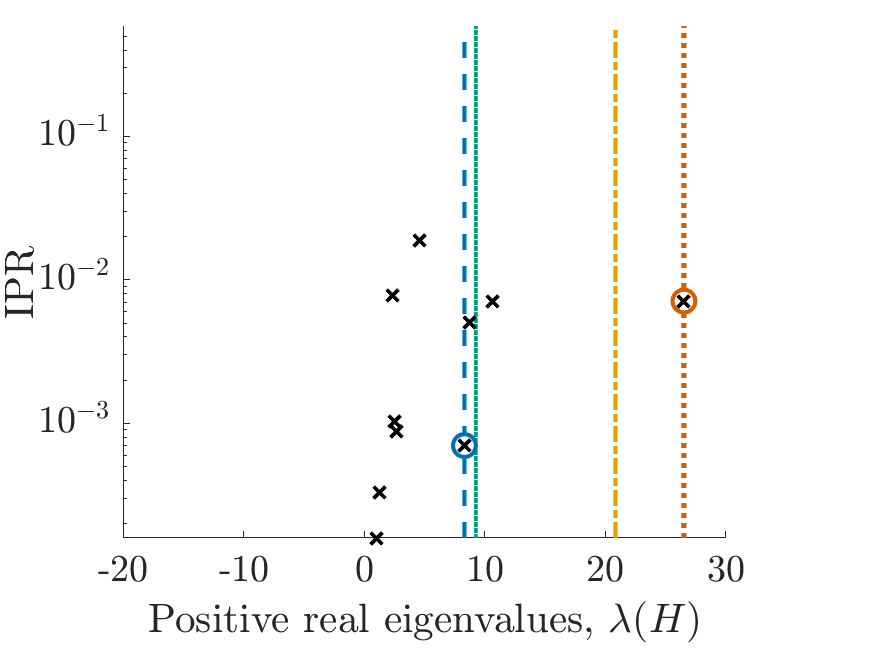}
    \hspace{0.3cm}
    \includegraphics[width=0.42\textwidth]{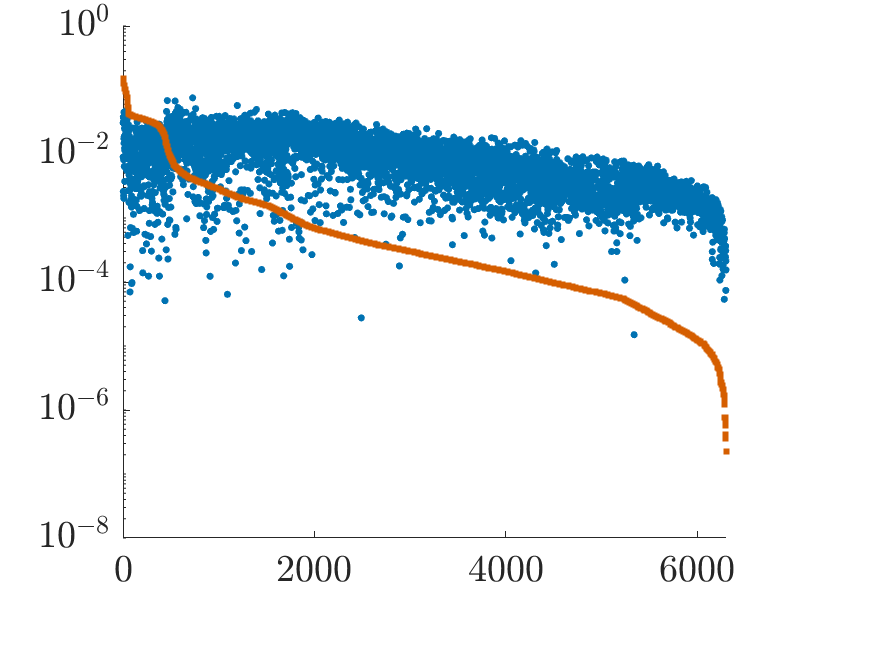}
    \caption{
    Analysis of a Gnutella peer-to-peer filesharing network.
    Top left: Eigenvalues of $H$, coloured according to the IPR of the corresponding vector $\mathbf{u}(H)$.
    Top right: Percolation strength $P$ (thick black line), susceptibility $\chi$ (solid orange line) and largest non-percolating component size $\langle S_2 \rangle$ (thin solid green line), over 500 Monte Carlo simulations. Numerical estimates are recorded in Table~\ref{Tab:Results}.
    Bottom left: IPR of the vectors $\mathbf{u}(H)$ corresponding to real positive eigenvalues. Circles indicate vectors $\mathbf{u}_{1}(H)$ (red) and $\mathbf{u}_{4}(H)$ (blue).
    Bottom right: Absolute value of entries of $\mathbf{u}_1(H)$ (red squares) and $\mathbf{u}_4(H)$ (blue triangles), sorted in descending order of entries in $\mathbf{u}_1(H)$.
    }
    \label{Fig:RW_Gnutella}
    \vspace{-0.3cm}
\end{figure}

Finally, we perform our algorithm on a Gnutella peer-to-peer filesharing network~\cite{ripeanu2002mapping,snapnets}. We construct an undirected network, where the $N$ nodes represent Gnutella hosts, while the $M$ undirected edges indicate a connection between them. We perform experiments on the largest connected component, where $N=6299$ and $M=20776$.
As shown in Figure~\ref{Fig:RW_Gnutella}, the IPR of the leading eigenvector is very low, suggesting that $p_\text{nbt}$ is a good prediction of the percolation threshold $p_c$, which is also supported experimentally. However, the eigenvector corresponding to the seventh largest real eigenvalue has a significantly smaller IPR than previous eigenvectors, and the reciprocal of this eigenvalue, $p_\text{deloc} = 1/\lambda_4$, aligns closely with the second peak of the susceptibility function, as well as the alternative predictions $p_\text{slc}$ and $p_\text{mnp}$. Furthermore, near this value, the slope of the percolation strength $P$ becomes (slightly) steeper. Some networks undergo multiple percolation transitions, leading to the susceptibility function exhibiting multiple peaks: we suggest that our approach might help us identify these additional transitions~\cite{pastor2020localization}.
\begin{figure}[t]
    \centering
    \includegraphics[width=0.32\textwidth,trim={1.6cm 0 2.1cm 0},clip]{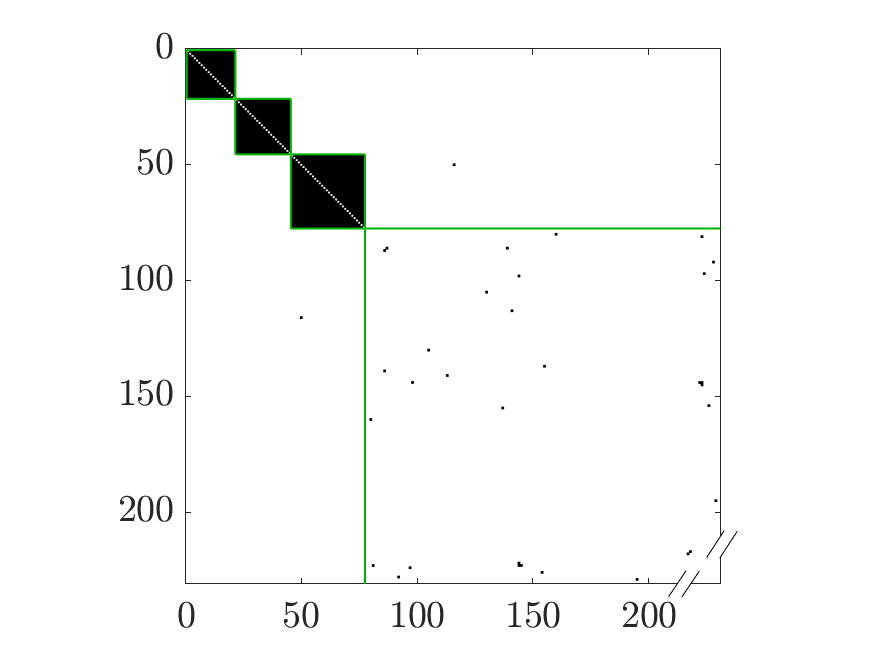}
    \hfill
    \includegraphics[width=0.32\textwidth,trim={1.6cm 0 2.1cm 0},clip]{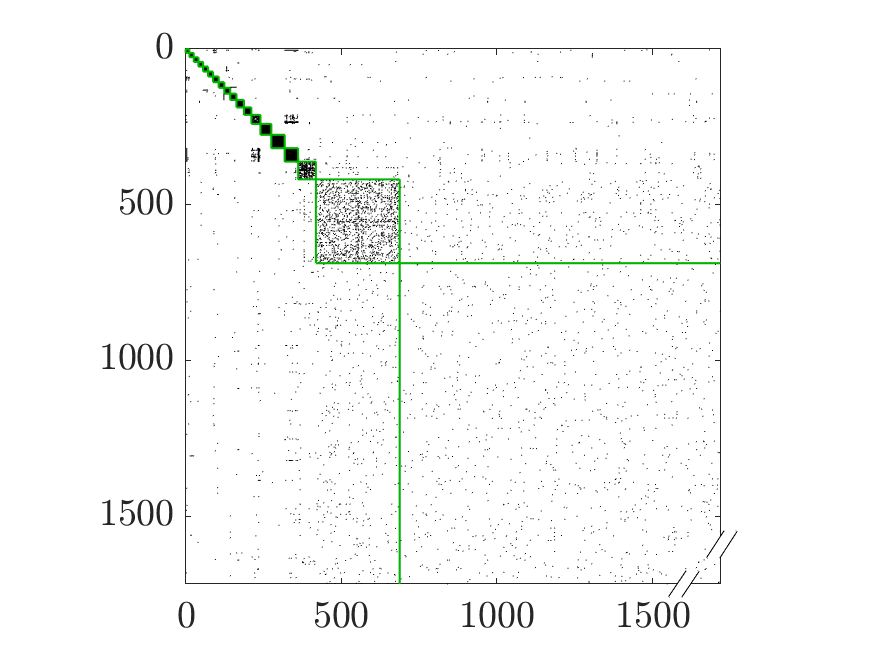}
    \hfill
    \includegraphics[width=0.32\textwidth,trim={1.6cm 0 2.1cm 0},clip]{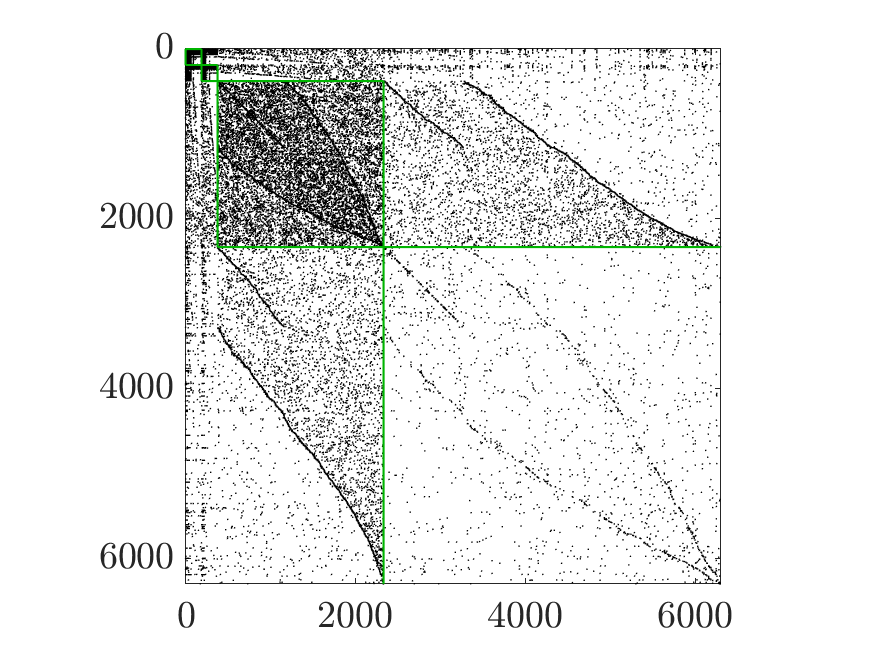}
    \caption{
    Heat map of the adjacency matrix, reordered with respect to the recovered communities (shown with solid lines, for the real-world networks HEP-TH (left), GRQC (middle) and Gnutella (right).
    }
    \label{Fig:adj_mats}
\end{figure}

We hypothesize that the real-world networks presented in this section exhibit noticeable differences between the inverse of the leading nonbacktracking eigenvalue and the estimates that use direct simulations because the networks contain multiple small dense subgraphs. We explore this idea experimentally by implementing research from Krzakala et al.~\cite{krzakala2013spectral}, and applying spectral clustering techniques to the reduced nonbacktracking matrix $H$ in order to detect some community structure. 
We perform $k$-means clustering on the eigenvectors corresponding to the $j$ largest positive real eigenvalues of $H$, where $j$ is the position of the eigenvalue used in our estimate, as recorded in Table~\ref{Tab:Results}.
Figure~\ref{Fig:adj_mats} clearly demonstrates that the HEP-TH network has 3 small dense cores: they are completely connected subgraphs, indicating 
a community of researchers that all collaborated together at some stage.
In the GRQC network, where we choose to recover $j=19$ communities, the figure shows that this network contains 18 small dense subgraphs, many of which are completely connected. Some of the communities have a smaller density of edges, suggesting a reason for the two distinct large drops in the IPR, which correspond to our two possible estimates. The global percolation properties are best represented by excluding all 18 small dense cores, suggesting why $p^*_\text{deloc} = 1/\lambda_{19}$ performs best.
Lastly, the Gnutella network contains 3 communities with a larger density than the remainder of the network. However, as one of these communities consists of nearly a third of the total population, the global percolation properties of the network are unclear. This may explain why the susceptibility function and the second largest component size exhibit multiple peaks with magnitudes of a similar order.

Although it is not always clear how to assign an exact numerical value to the observed percolation transition in real-world networks, we have presented some examples of networks where $p_\text{nbt}$ yields a clear underestimation. Furthermore, our investigation shows that exploring additional sub-leading eigenvalues can uncover more information into the percolation process.
As highlighted in Table~\ref{Tab:Results}, $p_\text{deloc}$ often aligns more closely with $p_\text{slc}$ than with $p_\text{sus}$. This observation is of particular interest to certain applications which value the dynamics of the second largest component more strongly than the susceptibility of the network. Additionally, we used the inverse participation ratio to measure localisation within networks. Alternative metrics, such as the ratio between the $L^2$ and $L^1$ norm, may slightly affect the outcome, becoming more appropriate in certain applications.

\section{Conclusion} \label{Sec:Conclusion}
In this work, we study the percolation transition on large sparse networks. One standard spectral approach $p_\text{nbt}$, which uses the inverse of the leading eigenvalue of the nonbacktracking matrix, typically fails when the mass of the corresponding eigenvector is concentrated on a small subset of nodes. For this reason, we propose an extension of this method that exploits additional sub-leading eigenvalues. We experimentally show that our approach performs better than the classic spectral estimate on a range of synthetic and real-world networks.

Our method of estimating the percolation transition requires identifying the first delocalised vector $\mathbf{u}(H)$, however, we do not provide a precise definition to classify delocalisation: there are many methods to measure vector localisation and different metrics can produce different outcomes. Further work is required to better understand the effect that these metrics have on our estimate.

We also observe experimentally that the nonbacktracking spectrum of real-world networks often contains a larger number of real eigenvalues when compared to synthetic networks. It would be interesting to further investigate whether this observation can uncover additional information regarding the structure of the network.

Finally, we provide a perturbation theorem to analyse our method on a core-periphery network model: we show that using the inverse of the second largest eigenvalue can outperform the standard estimate $p_\text{nbt}$ for a broad range of model parameters. Our theoretical bound for the eigenvalue perturbation, however, appears much larger than additional experimental results suggest. One avenue for future research is to investigate whether a stronger bound can be obtained, as well as applying the bound to a wider range of networks.

\bibliographystyle{comnet}

\appendix
\section{Proof of Theorem~\ref{Thrm: Eigenvalue perturbation}}
\label{App:Theorem_proof}
\begin{proof}[Proof of Theorem~\ref{Thrm: Eigenvalue perturbation}]
    Let $A, \hat{A}, X, \hat{X}$ be as assumed in Theorem~\ref{Thrm: Eigenvalue perturbation}. Let $\mu$ be an eigenvalue of $L$.

    If there exists an eigenvalue $\nu$ of $\hat{L}$ with $\mu = \nu$, then $|\mu - \nu| =0$ and the theorem holds. Now suppose $\mu \neq \nu$.

    Let $R_\mu = \mu^2 I - \mu A - X$ and $S_\mu = \mu^2 I - \mu \hat{A} - \hat{X}$.
    Then
    \begin{align*}
        R_\mu
        = S_\mu + \mu \hat{A} - \mu A + \hat{X} - X
        = S_\mu (I + S_\mu^{-1}(\mu \hat{A} - \mu A + \hat{X} - X)).
    \end{align*}
    \begin{cl}
    \label{claim:1}
    The matrix $R_\mu = \mu^2 I - \mu A - X$ is singular.
    \end{cl}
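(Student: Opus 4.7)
The plan is to exhibit a nonzero vector in the kernel of $R_\mu$ constructed from an eigenvector of $L$ associated to $\mu$. Since $\mu$ is an eigenvalue of $L$, there exists a nonzero vector $\begin{pmatrix} u \\ w \end{pmatrix}$, with $u,w \in \mathbb{C}^N$, satisfying
\begin{equation*}
\begin{pmatrix} A & X \\ I & 0 \end{pmatrix}
\begin{pmatrix} u \\ w \end{pmatrix}
= \mu \begin{pmatrix} u \\ w \end{pmatrix}.
\end{equation*}
Reading off blocks, this yields the two equations $Au + Xw = \mu u$ and $u = \mu w$.

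First I would handle the generic case $\mu \neq 0$. Then $w = u/\mu$, and substituting into the first block equation gives $Au + Xu/\mu = \mu u$, which after multiplication by $\mu$ rearranges to $(\mu^2 I - \mu A - X) u = 0$, i.e., $R_\mu u = 0$. To conclude that $R_\mu$ is singular, I need $u \neq 0$; but if $u = 0$ then $w = u/\mu = 0$, contradicting the eigenvector being nonzero. Hence $u$ is a nonzero vector in $\ker R_\mu$.

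Next I would deal with the edge case $\mu = 0$. The second block equation forces $u = 0$, and then the first gives $Xw = 0$ with $w \neq 0$ (since the eigenvector is nonzero). On the other hand $R_0 = -X$, so $w$ itself lies in $\ker R_\mu$, which is therefore nontrivial. In both cases $R_\mu$ is singular, proving the claim.

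I do not anticipate any real obstacle here: the argument is essentially the Schur-complement / companion-matrix observation that eigenvalues of the block matrix $L$ correspond to roots of the quadratic matrix pencil $\mu^2 I - \mu A - X$. The only care needed is distinguishing the $\mu = 0$ case, where one cannot divide through by $\mu$, and checking that the vector we produce is genuinely nonzero so that singularity of $R_\mu$ actually follows.
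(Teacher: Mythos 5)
Your proof is correct and essentially the same as the paper's: both extract a kernel vector of $R_\mu$ from the block equations satisfied by an eigenvector of $L$. The only difference is that the paper substitutes $\mathbf{x} = \mu\mathbf{y}$ (yielding $R_\mu\mathbf{y}=0$ directly, with no division by $\mu$ and hence no case split), whereas you solve $w = u/\mu$ and must treat $\mu=0$ separately — your version is marginally longer but also makes explicit the nonvanishing of the kernel vector, which the paper leaves implicit.
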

    We defer the proof of this, and all other claims, to the end of this section.
    \begin{cl}
    \label{claim:2}
    The matrix $S_\mu = \mu^2 I - \mu \hat{A} - \hat{X}$ is non-singular.
    \end{cl}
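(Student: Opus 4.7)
The plan is to exploit the block structure of $\hat{A}$ and $\hat{X}$ together with their simultaneous diagonalisability, in order to identify the spectrum of $S_\mu$ with an explicit set of scalars built from the block eigenvalues, and independently to identify the spectrum of $\hat{L}$ with the roots of associated quadratic polynomials. The non-singularity of $S_\mu$ will then follow from the standing assumption, made one line above the claim, that $\mu$ is distinct from every eigenvalue of $\hat{L}$.

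First I would observe that $S_\mu$ is itself block diagonal, with blocks
$S_{\mu,i} = \mu^2 I - \mu \hat{A}_i - \hat{X}_i$ for $i = 1,\dots,l$.
Since $P_i$ simultaneously diagonalises $\hat{A}_i$ and $\hat{X}_i$, write $P_i \hat{A}_i P_i^{-1} = \mathrm{diag}(\alpha_{i,1},\dots,\alpha_{i,N_i})$ and $P_i \hat{X}_i P_i^{-1} = \mathrm{diag}(\chi_{i,1},\dots,\chi_{i,N_i})$. Then $P_i S_{\mu,i} P_i^{-1}$ is diagonal with entries $\mu^2 - \mu \alpha_{i,j} - \chi_{i,j}$, so the eigenvalues of $S_\mu$ are exactly the numbers $\mu^2 - \mu \alpha_{i,j} - \chi_{i,j}$ ranging over all pairs $(i,j)$. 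In particular, $S_\mu$ is singular if and only if at least one of these quantities vanishes.

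Next I would compute the spectrum of $\hat{L}$ by the same mechanism. After a permutation of coordinates that groups together the rows and columns of $\hat{A}_i$ with those of the corresponding $\hat{X}_i$ and identity blocks, $\hat{L}$ becomes block diagonal with blocks $\hat{L}_i = \begin{pmatrix} \hat{A}_i & \hat{X}_i \\ I & 0 \end{pmatrix}$. Conjugating each $\hat{L}_i$ by $\mathrm{diag}(P_i, P_i)$ and permuting once more produces a block diagonal matrix whose blocks are the $2 \times 2$ companion-type matrices $\begin{pmatrix} \alpha_{i,j} & \chi_{i,j} \\ 1 & 0 \end{pmatrix}$, whose characteristic polynomial is $\nu^2 - \alpha_{i,j}\nu - \chi_{i,j}$. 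Hence $\nu$ is an eigenvalue of $\hat{L}$ if and only if $\nu^2 - \alpha_{i,j}\nu - \chi_{i,j} = 0$ for some $(i,j)$.

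Comparing the two computations shows that the eigenvalues of $S_\mu$ vanish for the same $(i,j)$-pairs as those for which $\mu$ solves the corresponding quadratic; that is, $S_\mu$ is singular if and only if $\mu$ lies in the spectrum of $\hat{L}$. Under the case assumption that $\mu \neq \nu$ for every eigenvalue $\nu$ of $\hat{L}$, this forces $S_\mu$ to be non-singular, which is the desired conclusion. The only mild obstacle is the bookkeeping needed to pass from the block-diagonal form in the natural coordinates to the collection of $2 \times 2$ companion blocks, but this is a purely clerical permutation argument once the simultaneous diagonalisation of each pair $(\hat{A}_i,\hat{X}_i)$ by $P_i$ is invoked.
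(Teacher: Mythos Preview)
Your argument is correct, but it takes a substantially different route from the paper. The paper proves Claim~\ref{claim:2} by a one-line Schur complement: since $\mu$ is not an eigenvalue of $\hat{L}$, one has $\det(\hat{L}-\mu I)\neq 0$, and expanding this block determinant via the Schur complement of the lower-right block $-\mu I$ yields $\det(-\mu I)\det(\hat{A}-\mu I+\mu^{-1}\hat{X})\neq 0$, whence $\det(S_\mu)\neq 0$. This argument never touches the block decomposition of $\hat{A},\hat{X}$ or the simultaneous diagonalisability hypothesis. By contrast, you compute the spectrum of $S_\mu$ and the spectrum of $\hat{L}$ explicitly via the common diagonalisation by the $P_i$, and then match them. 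What you gain is that your computation subsumes Claims~\ref{claim:3} and~\ref{claim:4} of the paper (your identification of the eigenvalues of $S_{\mu,i}$ is exactly Claim~\ref{claim:3}, and your reduction of $\hat{L}$ to $2\times 2$ companion blocks is exactly Claim~\ref{claim:4}), so the extra effort is not wasted in the context of the full proof of Theorem~\ref{Thrm: Eigenvalue perturbation}. What the paper's approach buys is brevity and independence from the diagonalisability hypothesis: the Schur-complement identity $\det(\hat{L}-\mu I)=(-\mu)^{N}\det(\hat{A}-\mu I+\mu^{-1}\hat{X})$ holds for \emph{any} $\hat{A},\hat{X}$, which makes it the natural tool when one only needs non-singularity rather than the full spectrum.
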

    As $R_\mu$ is singular and $S_\mu$ is non-singular, then $(I + S_\mu^{-1}(\mu \hat{A} - \mu A + \hat{X} - X))$ must be singular and have a zero eigenvalue. Thus $-1$ is an eigenvalue of $S_\mu^{-1}(\mu \hat{A} - \mu A + \hat{X} - X))$.
    From the spectral radius inequality, the magnitude of each eigenvalue of a matrix is bounded by the spectral norm. Hence
    \begin{align}
    \begin{split}
        1
        \leq \| S_\mu^{-1}(\mu \hat{A} - \mu A + \hat{X} - X) \|_2
        \leq \| S_\mu^{-1} \|_2 \cdot \|(\mu \hat{A} - \mu A + \hat{X} - X))\|_2.
    \label{eq:pert_proof_Smu_1}
    \end{split}
    \end{align}
    
    Now recall that, for each $i$, the matrices $\hat{A}_i$ and $\hat{X}_i$ are commonly diagonalisable by $P_i$ such that $D_{\hat{A}_i} = P_i \hat{A}_i P_i^{-1}$ and $D_{\hat{X}_i} = P_i \hat{X}_i P_i^{-1}$ are diagonal matrices. Furthermore, each diagonal entry of $D_{\hat{A}_i}$ and $D_{\hat{X}_i}$ are the eigenvalues of $\hat{A}_i$ and $\hat{X}_i$, which we denote as $\lambda_j(\hat{A}_i)$ and $\lambda_j(\hat{X}_i)$ respectively. Then
    \begin{align*}
        S_{\mu,i} = \mu^2 I - \mu \hat{A}_i - \hat{X}_i
        = P_i^{-1} (\mu^2 I - \mu D_{\hat{A}_i} - D_{\hat{X}_i}) P_i.
    \end{align*}
    \begin{cl}
    \label{claim:3}
    The eigenvalues of $S_{\mu,i}$ are the complex numbers $\lambda_j(S_{\mu,i}) = \mu^2 - \mu \lambda_j(\hat{A}_i) - \lambda_j (\hat{X}_i).$ 
    \end{cl}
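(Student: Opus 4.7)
The plan is to exploit the similarity representation $S_{\mu,i} = P_i^{-1}(\mu^2 I - \mu D_{\hat A_i} - D_{\hat X_i}) P_i$ that was just written out in the excerpt preceding the claim. Since similar matrices share the same spectrum (with multiplicity), it suffices to identify the eigenvalues of the inner factor $M := \mu^2 I - \mu D_{\hat A_i} - D_{\hat X_i}$.

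Next I would observe that $M$ is a linear combination of diagonal matrices ($I$, $D_{\hat A_i}$, and $D_{\hat X_i}$ are each diagonal by construction, since $P_i$ commonly diagonalises $\hat A_i$ and $\hat X_i$), so $M$ is itself diagonal. The $j$-th diagonal entry of $M$ is therefore exactly $\mu^2 - \mu\lambda_j(\hat A_i) - \lambda_j(\hat X_i)$, where $\lambda_j(\hat A_i)$ and $\lambda_j(\hat X_i)$ denote the diagonal entries of $D_{\hat A_i}$ and $D_{\hat X_i}$, i.e.\ the corresponding eigenvalues of $\hat A_i$ and $\hat X_i$. The eigenvalues of any diagonal matrix are precisely its diagonal entries, which yields the claimed formula for the eigenvalues of $S_{\mu,i}$.

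There is essentially no obstacle here: the work has already been done by the common diagonalisation hypothesis and the rewrite displayed just above the claim. The only subtlety worth flagging is that the index $j$ on the right-hand side must refer to the same ordering chosen when diagonalising the pair $(\hat A_i, \hat X_i)$ by $P_i$; since $\hat A_i$ and $\hat X_i$ are simultaneously diagonalised, both $\lambda_j(\hat A_i)$ and $\lambda_j(\hat X_i)$ are read off in this common basis, so the pairing inside the expression $\mu^2 - \mu\lambda_j(\hat A_i) - \lambda_j(\hat X_i)$ is unambiguous.
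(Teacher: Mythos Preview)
Your proposal is correct and takes essentially the same approach as the paper: both exploit the similarity $S_{\mu,i}=P_i^{-1}(\mu^2 I-\mu D_{\hat A_i}-D_{\hat X_i})P_i$ to read off the eigenvalues from the diagonal inner factor. The paper phrases this via the characteristic determinant $\det(S_{\mu,i}-\tilde\lambda I)=0$ and factors the resulting product, whereas you invoke ``similar matrices share spectra'' and ``diagonal entries are eigenvalues'' directly, but the content is identical.
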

    Using the spectral radius inequality,
    \begin{equation*}
        \|S_{\mu,i} \|_2 \geq \| P_i \|_2 \cdot \| P_i^{-1} \|_2 \cdot \max_j |\mu^2 - \mu \lambda_j(\hat{A}_i) - \lambda_j (\hat{X}_i)|.
    \end{equation*}
    As $S_{\mu,i}$ and $P_i$ are non-singular, we can invert the inequality such that
    \begin{equation*}
        \|S_{\mu,i}^{-1} \|_2 \leq \| P_i^{-1} \|_2 \cdot \| P_i\|_2 \cdot \left( \max_j |\mu^2 - \mu \lambda_j(\hat{A}_i) - \lambda_j (\hat{X}_i)| \right)^{-1}.
    \end{equation*}
    Let $k_i = \argmax_j |\mu^2 - \mu \lambda_j(\hat{A}_i) - \lambda_j (\hat{X}_i)|$. The spectral norm of a block diagonal matrix is bounded by the maximum spectral norm of each individual block. Thus
    \begin{equation}
    \label{eq:pert_proof_Smu_2}
        \|S_{\mu}^{-1} \|_2 \leq \max_i \|S_{\mu,i}^{-1} \|_2 = \max_i \left( \frac{\| P_i^{-1} \|_2 \cdot \| P_i\|_2}{|\mu^2 - \mu \lambda_{k_i}(\hat{A}_i) - \lambda_{k_i} (\hat{X}_i)|} \right).
    \end{equation}
    By combining \eqref{eq:pert_proof_Smu_1} and \eqref{eq:pert_proof_Smu_2}, we get
    \begin{equation*}
        1 \leq \|(\mu \hat{A} - \mu A + \hat{X} - X)\|_2 \cdot \max_i \left( \frac{\| P_i^{-1} \|_2 \cdot \| P_i\|_2}{|\mu^2 - \mu \lambda_{k_i}(\hat{A}_i) - \lambda_{k_i} (\hat{X}_i)|} \right).
    \end{equation*}
    Let $i_*$ be the $i$ where the maximum occurs. Then
    \begin{equation*}
        |\mu^2 - \mu \lambda_{k_{i_*}}(\hat{A}_{i_*}) - \lambda_{k_{i_*}} (\hat{X}_{i_*})| \leq \| P_{i_*}^{-1} \|_2 \cdot \| P_{i_*}\|_2 \cdot \|(\mu \hat{A} - \mu A + \hat{X} - X)\|_2.
    \end{equation*}
    Let $\alpha$ and $\beta$ be the two complex roots of $\mu^2 - \mu \lambda_{k_{i_*}}(\hat{A}_{i_*}) - \lambda_{k_{i_*}} (\hat{X}_{i_*})$, such that
    \begin{equation*}
        (\mu - \alpha)(\mu - \beta) = \mu^2 - \mu \lambda_{k_{i_*}}(\hat{A}_{i_*}) - \lambda_{k_{i_*}} (\hat{X}_{i_*}) = 0.
    \end{equation*}
    \begin{cl}
    \label{claim:4}
    The two complex roots $\alpha$ and $\beta$ are eigenvalues of $\hat{L}$. 
    \end{cl}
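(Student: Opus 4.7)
The plan is to show that any $\nu \in \mathbb{C}$ is an eigenvalue of $\hat{L}$ precisely when the scalar polynomial equation $\nu^2 - \nu \lambda_j(\hat{A}_i) - \lambda_j(\hat{X}_i) = 0$ is satisfied for some block index $i$ and some common eigenindex $j$. Applying this characterisation to the pair $(i_*, k_{i_*})$ immediately places $\alpha$ and $\beta$ in the spectrum of $\hat{L}$.

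First I would compute $\det(\hat{L} - \nu I)$ via the Schur complement. Writing
\begin{equation*}
\hat{L} - \nu I =
\begin{pmatrix}
\hat{A} - \nu I & \hat{X} \\
I & -\nu I
\end{pmatrix},
\end{equation*}
the standard block determinant formula (valid for $\nu \neq 0$, with the $\nu = 0$ case handled separately by a direct check using $\det \hat{L} = \det(-\hat{X})$) yields
\begin{equation*}
\det(\hat{L} - \nu I) = (-1)^N \det\!\bigl(\nu^2 I - \nu \hat{A} - \hat{X}\bigr).
\end{equation*}
Thus $\nu$ is an eigenvalue of $\hat{L}$ if and only if the matrix pencil $\nu^2 I - \nu \hat{A} - \hat{X}$ is singular.

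Next I would exploit the block-diagonal structure of $\hat{A}$ and $\hat{X}$. Since $\hat{A} = \mathrm{diag}(\hat{A}_1, \dots, \hat{A}_l)$ and $\hat{X} = \mathrm{diag}(\hat{X}_1, \dots, \hat{X}_l)$, the pencil decomposes block-wise and its determinant factorises as $\prod_i \det(\nu^2 I - \nu \hat{A}_i - \hat{X}_i)$. Inside each block I would use the common diagonalisability hypothesis: conjugating by $P_i$ transforms $\hat{A}_i$ and $\hat{X}_i$ simultaneously into diagonal matrices with entries $\lambda_j(\hat{A}_i)$ and $\lambda_j(\hat{X}_i)$, whence
\begin{equation*}
\det\!\bigl(\nu^2 I - \nu \hat{A}_i - \hat{X}_i\bigr) = \prod_{j} \bigl(\nu^2 - \nu \lambda_j(\hat{A}_i) - \lambda_j(\hat{X}_i)\bigr).
\end{equation*}
This is exactly the step that was used earlier in the proof (Claim~\ref{claim:3}) to compute the eigenvalues of $S_{\mu,i}$, so it reuses machinery already set up.

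Combining these two reductions, the spectrum of $\hat{L}$ is exactly the union over $(i,j)$ of the roots of the quadratics $\nu^2 - \nu \lambda_j(\hat{A}_i) - \lambda_j(\hat{X}_i) = 0$. Taking $(i,j) = (i_*, k_{i_*})$ shows that $\alpha$ and $\beta$, being the two roots of this particular quadratic, lie in the spectrum of $\hat{L}$, which is what we wanted. I do not anticipate any serious obstacle: the only mild subtlety is the $\nu = 0$ case in the Schur complement step, which is readily dispatched by a direct determinant computation, and one should also note that the $P_i$ conjugation does not affect the determinant, so the singularity of the block pencil is equivalent to at least one scalar factor vanishing.
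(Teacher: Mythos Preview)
Your proposal is correct and follows essentially the same approach as the paper: reduce the eigenvalue condition for $\hat{L}$ to the singularity of $\nu^2 I - \nu \hat{A} - \hat{X}$ via the Schur complement (the paper invokes this by referencing Claim~\ref{claim:2}), then use the block-diagonal structure together with the common diagonalisability by $P_i$ to factorise the determinant into scalar quadratics, and conclude that $\alpha$ and $\beta$ are among the roots. Your treatment is in fact slightly more careful than the paper's, since you explicitly flag the $\nu = 0$ edge case in the Schur complement step.
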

    
    Thus
    \begin{equation}
    \label{eq:thrm_proof_inequality}
        (\mu - \alpha)(\mu - \beta) \leq \kappa (P_{i_*}) \cdot \|\mu \hat{A} - \mu A + \hat{X} - X\|_2,
    \end{equation}
    where $\kappa (P_{i_*}) = \| P_{i_*}^{-1} \|_2 \cdot \| P_{i_*}\|_2$ is the condition number of the diagonalisation matrix $P_{i_*}$. This inequality does not hold if both $|\mu - \alpha| > \sqrt{z}$ and $|\mu - \beta| > \sqrt{z}$, where $z$ is the right hand side of \eqref{eq:thrm_proof_inequality}. Hence there exists an eigenvalue $\nu$ of $\hat{L}$ such that
    \begin{equation*}
        |\mu - \nu| \leq \sqrt{\kappa (P_{i_*})} \cdot \sqrt{\|\mu \hat{A} - \mu A + \hat{X} - X\|_2}.
    \end{equation*}
    Observing that $\kappa (P_{i_*}) \leq \max_i \kappa (P_{i})$ completes the proof.
\end{proof}
\\

\begin{proof}[Proof of Claim~\ref{claim:1}]

    Let $\mu$ be an eigenvalue of $L$ and let $\mathbf{v} = \begin{bmatrix}\mathbf{x} &\mathbf{y}\end{bmatrix}^T$ be its eigenvector. Then $L \mathbf{v} = \mu \mathbf{v}$ becomes
    \begin{align*}
        A \mathbf{x} + X \mathbf{y} &= \mu I \mathbf{x},
        \\
        I \mathbf{x} + 0 \mathbf{y} &= \mu I \mathbf{y}.
    \end{align*}
    Substituting the second line into the first gives
    \begin{equation*}
        0 = \mu^2 I \mathbf{y} - \mu A \mathbf{y} - X \mathbf{y} = R_\mu \mathbf{y}.
    \end{equation*}
    Hence $R_\mu$ has a zero eigenvalue and is singular.
\end{proof}
\\

\begin{proof}[Proof of Claim~\ref{claim:2}]

    As $\mu$ is not an eigenvalue of $\hat{L}$, then $\det(\hat{L} - \mu I) \neq 0$. By applying the Schur complement,
    \begin{equation*}
        \det(\hat{L} - \mu I)
        = \det \begin{bmatrix}\hat{A} - \mu I & \hat{X} \\ I  & - \mu I \end{bmatrix} 
        =
        \det (-\mu I) \det \left( \hat{A} - \mu I - \hat{X}(-\mu^{-1}I)I \right) 
        \neq 0.
    \end{equation*}
    Then $\mu \neq 0$ and $\det(\mu^2I - \mu \hat{A} - \hat{X}) = \det(S_\mu) \neq 0$. Hence $S_\mu$ is non-singular.
\end{proof}
\\

\begin{proof}[Proof of Claim~\ref{claim:3}]

    Let $\tilde{\lambda}$ be an eigenvalue of $S_{\mu,i}$, such that $\tilde{\lambda}$ satisfies $\det(S_{\mu,i} - \tilde{\lambda} I) = 0$. Then
    \begin{equation*}
        \det (\mu^2I - \mu \hat{A}_i - \hat{X}_i - \tilde{\lambda} I) 
        =
        \det \left( P_i^{-1} \cdot ( \mu^2I - \mu D_{\hat{A}_i} - D_{\hat{X}_i} - \tilde{\lambda} I ) \cdot P_i \right)
        = 0
    \end{equation*}
    As $P_i$ is non-singular, then
    $$
    \det (\mu^2I - \mu D_{\hat{A}_i} - D_{\hat{X}_i} - \tilde{\lambda} I) 
    =
    \prod_j \left( \mu^2 - \mu \lambda_j (\hat{A}_i) - \lambda_j (\hat{X}_i) - \tilde{\lambda} \right)
    = 0.
    $$
    This equation has the set of solutions
    \begin{equation*}
        \tilde{\lambda} = \mu^2 - \mu \lambda_j (\hat{A}_i) - \lambda_j (\hat{X}_i),
    \end{equation*}
    for all $1 \leq j \leq N$.
\end{proof}
\\

\begin{proof}[Proof of Claim~\ref{claim:4}]
    In the proof of Claim~\ref{claim:2}, we find that non-zero eigenvalues $\mu$ of $\hat{L}$ satisfy $\det( \mu^2I - \mu \hat{A} - \hat{X}) = 0$. As $P_i$ is non-singular, then
    $$
    \prod_i \det(\mu^2 I - \mu D_{\hat{A}_i} - D_{\hat{X}_i}) 
    =
    \prod_{i,j}  \det \left( \mu^2 I - \mu \lambda_j(\hat{A}_i) - \lambda_j(\hat{X}_i) \right)
    = 0.$$
    This has solutions $\mu^2 - \mu \lambda_j(\hat{A}_i) - \lambda_j(\hat{X}_i) = 0$ for all $i$ and $j$. By definition, $\alpha$ and $\beta$ must also be solutions.

\end{proof}

\section{Additional experimental results}
\label{App:Extra_phase}
Figure~\ref{Fig:Syn_Phase_appendix} shows the IPR of the vector $\mathbf{u}_2(H)$, corresponding to the second largest eigenvalue of $H$, where the eigenvalues are instead sorted in descending order of absolute value. The figure corresponds to the experiments shown in Figure~\ref{Fig:Syn_Phase}.
In the regime $d_\text{d} > d_\text{s}$ (above the dashed white line), the model has a core-periphery structure. The vector $\mathbf{u}_2(H)$ is much less localised in the regime $\sqrt{d_\text{d}} < d_\text{s} < d_\text{d}$ (between the white lines), where we expect $1/|\lambda_2(H)|$ to closely predict the percolation transition. In the regime $\sqrt{d_\text{d}} > d_\text{s}$ (above the solid white line), the vector is more localised: the eigenvalue magnitude $|\lambda_2(H)|$ instead reflects the radius of the bulk eigenvalues corresponding to the core subgraph, failing to capture information regarding the global percolation properties.
\begin{figure}[h!]
\vspace{-0.2cm}
    \centering
    \includegraphics[width=0.46\textwidth]
    {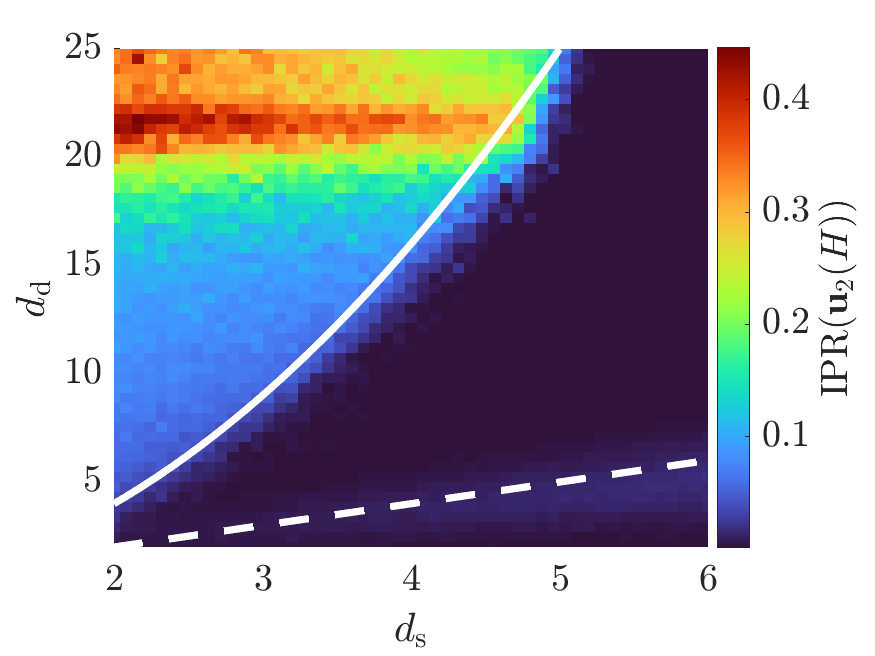}
    \label{Fig:Syn_Phase_appendix}
    \caption{
    Results of an SBM network corresponding to Figure~\ref{Fig:Syn_Phase} using probability matrix $Q$, with $N_\text{d}=25$ and $N_\text{s}=1500$. 
    The inverse participation ratio of the vector $\mathbf{u}_2(H)$ corresponding to the second largest eigenvalue sorted in descending order of absolute value, averaged over 50 realisations.
    Solid line: $d_\text{s} = \sqrt{d_\text{d}}$.
    Dashed line: $d_\text{s} = d_\text{d}$.
    }
    \label{Fig:Syn_Phase_Appendix}
\end{figure}

\end{document}